\newcommand{\notinfull}[1]{}
\newcommand{\infull}[1]{#1}
\newcommand{\todo}[1]{\underline{\sf todo}: {\color{red}{#1}}}
\newcommand{\todo}[1]{}
\newenvironment{tarray}[2][c]{
  \settowidth{\dimen1}{${} = {}$}%
  \setlength{\arraycolsep}{0.125\dimen1}%
  \hspace{-0.125\dimen1}\array[#1]{#2}\relax
}
{
   \endarray\hspace{-0.125\dimen1}
}
\def\tbb{\begin{tarray}{llll}}
\def\tbbt{\begin{tarray}[t]{llll}}
\def\tee{\end{tarray}}
\newcommand{\myparagraph}[1]{\noindent\textbf{#1}}
\newcommand{\A}{\mathcal{A}}
\newcommand{\Atau}{\mathcal{A}_{\tau}}
\newcommand{\R}{\mathcal{\mathop{R}}}
\newcommand{\T}{\mathcal{T}}
\newcommand{\M}{\mathcal{M}}
\newcommand{\D}{\mathcal{D}}
\newcommand{\Dbox}{\mathcal{D}_{\Box}}
\newcommand{\Sta}{\mathcal{S}}
\newcommand{\C}{\mathcal{C}}
\newcommand{\Out}{\mathit{O}}
\newcommand{\In}{\mathit{I}}
\newcommand{\Exe}{\mathit{E}}
\newcommand{\I}{\mathcal{I}}
 \newcommand{\tphdL}[1]{{#1 \!}^{\scriptscriptstyle <}}
 \newcommand{\tphdR}[1]{\prescript{\scriptscriptstyle >}{}{\! #1}}
\newcommand{\bbisimd}{\ensuremath{\mathrel{\bbisim^{\Delta}}}}
\newcommand{\witequ}{\ensuremath{\mathrel{=_{\text{wt}}^{\infty}}}}
\newcommand{\qin}{q_{\mathalpha{in}}}
\newcommand{\witr}{\ensuremath{\mathit{Tr_w^{\infty}}}}
\newcommand{\encode}[1]{\ensuremath{\ulcorner {#1}\urcorner}}
\newcommand{\fullversion}[1]{} 
\newcommand{\delete}[1]{}
\title{On the Executability of Interactive Computation}
\author{Bas Luttik \and Fei Yang}
\institute
{Eindhoven University of Technology, The Netherlands
}
\begin{document}
\maketitle
\begin{abstract}
The model of \emph{interactive Turing machines} (ITMs) has been proposed to
characterise which stream translations are \emph{interactively computable}; the model of  \emph{reactive Turing machines}
(RTMs) has been proposed to characterise which behaviours are \emph{reactively
  executable}. In this article we provide a comparison of
the two models. We show, on the one hand, that the behaviour
exhibited by ITMs is reactively executable, and, on the other hand,
that the stream translations naturally associated with RTMs are
interactively computable. We conclude from these results that the
theory of reactive executability subsumes the theory of interactive
computability. Inspired by the existing model of ITMs with advice, which
provides a model of evolving computation, we also consider RTMs with
advice and we establish that a facility of advice considerably upgrades the
behavioural expressiveness of RTMs: every countable transition system
can be simulated by some RTM with advice up to a fine notion of
behavioural equivalence.
\end{abstract}

\delete{\todo{BL: Improve storyline.

In my view, the main goal of the research presented in the paper is to
show that van Leeuwen and Wiedermann's theory of \emph{interactive
  computation} is subsumed by our theory of \emph{executability}. The
design of our paper should clearly reflect this goal:

\begin{enumerate}
\item We should start by briefly explaining the theory of
  executability (introducing the formal notions of transition system, RTM, and
  executable transition system, and perhaps also branching bisimilarity), and by
  explaining the theory of interactive computability (introducing the
  formal notions of $\omega$-translation, ITM, interactively
  computable $\omega$-translation).
\item To be able to compare the two notions, it is important to have a
  common semantics; the framework of transition systems is the more
  refined of the two. We may observe that with some transition systems
  one can naturally associate an $\omega$-translation, but with other
  one cannot. The first step, therefore, is to define a class of
  transition systems with which one can associate an
  $\omega$-translation. We use this class to define the notion of
  executable $\omega$-translation.
\item Once the semantic domains are unified, the first research
  question is: ``Are all interactively computable $\omega$-translations
  executable?'' We answer this question positively in two steps: First,
  we associate with every ITM an appropriate transition system
  (appropriate means: correct in the sense the $\omega$-translation
  associated directly with the ITM is the same as the
  $\omega$-translation associated with the transition system
  associated with the ITM). Then, we show that the transition systems
  associated with ITMs are executable (up to divergence-preserving
  branching bisimilarity).
\item The other natural question is: ``Are all executable
  $\omega$-translations interactively computable?'' We answer this
  question by first showing that executable $\omega$-translations are
  limit-continuous; the positive answer to the question is then an
  immediate consequence of van Leeuwen and Wiedermanns
  characterisation of interactively computable $\omega$-translations.
\item Perhaps we should also study relaxations of the
  \emph{interactiveness requirement} of van Leeuwen and Wiedermann, as
  we do now to some extent at the end of Section 4.
\item Van Leeuwen and Wiedermann also consider a notion of advice, to
  study evolving computation. We show that advice can be naturally
  incorporated in the theory of executability. We establish a general
  result about the expressiveness of executability with advice, and we
  show that interactive computability with advice is subsumed by
  executability with advice.
\end{enumerate}

Comments:
\begin{enumerate}
\item There are several possibilities for combining the first two
  items above in a single preliminaries section. For instance, we
  could start with interactive computability, then discuss
  executability, then discuss how to retrieve $\omega$-translations
  from transition systems. Or we could start with executability, then
  discuss interactive computability, and finally explain how
  $\omega$-translations can be associated with transition systems. I
  think I have no preference for now.
\item We should state explicitly in the introduction that we do not
  claim RTMs to be more suitable for the analysis of interactive
  computation. Rather, we believe that they provide a more general and
  more flexible model to explain contemporary computing because it treats
  interaction in the style of concurrency theory. It is then important to
  confirm, as we do, that indeed it is in a way more general than ITMs.
\end{enumerate}
}}
\section{Introduction}\label{sec:intro}

According to the Church-Turing thesis, the classical Turing machine model adequately formalises which functions from natural numbers to natural numbers are effectively computable. There is, however, a considerable semantic gap between computing the result of a function applied to a natural number and the way computing systems operate nowadays. Modern computing systems are reactive, they are in continuous interaction with their environment, and their operation is not supposed to terminate. Quite a number of extended models of computation have been proposed in recent decades to study the combination of computation and interaction (see, e.g., the collection in \cite{GSW2006}). In this paper we compare \emph{interactive Turing machines} and \emph{reactive Turing machines}.

Van Leeuwen and Wiedermann have developed a theory of interactive computation from the stance that an interactive computation can be viewed as a never-ending exchange of symbols between a component and its unpredictable interactive environment \cite{vLW2000}. Semantically, this amounts to studying the recognition, generation and translation of infinite streams of symbols. In \cite{vLW2001}, the notion of interactive Turing machine (ITM) is put forward as a tool to formally characterise which stream translations are interactively computable. The notion is subsequently extended with an (non-computable) advice mechanism in order to obtain a non-uniform machine model. Van Leeuwen and Wiedermann argue that the resulting model of \emph{interactive Turing machines with advice} is as powerful as their model of evolving finite automata, and they conclude from this, on intuitive grounds, that ITMs with advice are adequate to model evolving system such as the Internet \cite{vLW2008}.

The model of interactive Turing machines focusses on capturing the computational content of sequential interactive behaviour. The included mechanism of interaction is therefore limited to achieving this goal, and does not easily generalise to more than one distributed component, nor does it allow for more fine-grained considerations of the behaviour of reactive systems. The behavioural theory of reactive systems, on the other hand, has focussed on aspects of modelling, specification and verification (see, e.g., \cite{AILS2007}).

To integrate computability theory and the behavioural theory of reactive systems, the notion of reactive Turing machine (RTM) has been proposed in \cite{BCLT2009,BLT2013}. It extends Turing machines with concurrency-style interaction. Semantically, the operational behaviour of an RTM is given by a transition system. From this transition system one may extract a set of computations, or stream translations,  but a more refined analysis is also possible. In fact, to study the effect of interaction of multiple components many refined notions of behavioural equivalence have been developed in the concurrency theory literature \cite{Glabbeek1993}. The notion of RTM gives rise to a general theory of \emph{executability}: a transition system is executable (usually up to some preferred notion of behavioural equivalence) if there exists an RTM that has the transition system as its semantics. (We refer to \cite{BLT2013} for more a elaborate motivation of the notion of RTM.)

The aim of this paper is to make a connection between the theory of interactive computabililty and the theory of reactive systems, providing a comparison of the models of ITMs and RTMs in both their semantic domains. We shall first, in Section~\ref{sec:pre}, recapitulate both models. Then, in Section~\ref{sec:itm} we present a transition-system semantics for ITMs; the transition system associated with an ITM is executable up to a fine notion of behavioural equivalence. In Section~\ref{sec:exeomega} we shall identify a subclass of RTMs that can be considered suitable for stream translation, and prove that the stream translation associated with an RTM in this subclass is interactively computable. In Section~\ref{sec:advice} we consider an extension of RTMs with an advice mechanism adapted from the advice mechanism considered for ITMs. RTMs with advice can execute every countable transition system, at the cost of introducing divergence in the computation. The paper ends with a conclusion in Section~\ref{sec:conclusion}.

\notinfull{
For the reviewers' convenience, proofs of the results and some standard formal definitions (also to be found in the cited literature) are included in an appendix. A full version of this submission with proofs of all results is available as~\cite{LY16}.} 
\section{Preliminaries}\label{sec:pre}
\delete{
In this section, we briefly introduce the theories of interactive computability and executability, and define a subclass of transition systems with which $\omega$-translations can be associated.
}
\subsection{The Theory of Interactive Computation}

In~\cite{vLW2006}, van Leeuwen and Wiedermann present an analysis of interactive computation on the basis of a \emph{component} $C$ (thought to behave according to a deterministic program) interacting with an unpredictable \emph{environment} $E$. They discuss the consequences of a few general postulates pertaining to the behaviour and interaction of $C$ and $E$ for interactive recognition, interactive generation and interactive translation. In their analysis, the component $C$ acts as a stream transducer, transforming an infinite input stream of data symbols from  $\Sigma=\{0,1\}$ presented by $E$ at its input port into an infinite output stream of symbols from $\Sigma$ produced at its output port. Henceforth, by an \emph{$\omega$-translation} we mean a mapping $\phi: \Sigma^{\omega}\rightarrow\Sigma^{\omega}$ (with $\Sigma^{\omega}$ denoting the set of streams, i.e., infinite sequences, over $\Sigma$).

Interactive computation is a step-wise process. It is not required that the environment offers a symbol in every step, nor that the component produces a symbol in every step. For the purpose of modelling components, however, it is convenient to record that nothing is offered or produced. The symbol $\lambda$ is used to indicate the situation that no symbol is offered at the input port or produced at the output port, and we let $\Sigma_\lambda=\Sigma\cup\{\lambda\}$. It is assumed that when $E$ offers a non-$\lambda$ symbol in some step, then the component $C$ produces a non-$\lambda$ symbol at its output port within finitely many steps, and vice versa; this assumption is referred to as the \emph{interactiveness} (or \emph{finite delay}) condition in the work of van Leeuwen and Wiedermann.


In order to formally define which $\omega$-translations are interactively computable by a computational device, van Leeuwen and Wiedermann proposed the notion of \emph{interactive Turing machine} \cite{vLW2001,vLW2001a}. It extends the classical notion of Turing machine with an input port and an output port, through which it exchanges an infinite, never ending stream of data symbols with its environment. Interactive Turing machines use a two-way infinite tape as memory on which they can write symbols from some presupposed set $\Dbox$ of \emph{tape symbols}, not necessarily disjoint from $\Sigma$ and including the special $\Box$ symbol to denote an empty tape cell. Our formal definition below is adapted from~\cite{Verbaan2006} (but we leave out the distinction between internal and external states).

\begin{definition}\label{def:itm}
A \emph{(deterministic) interactive Turing machine} (ITM) with a single work tape is a triple $\I=(Q,\step{}_{\I},\qin)$, where
\begin{enumerate}
    \item $Q$ is its set of \emph{states};
    \item $\step{}_{\I}:Q\times\Dbox\times\Sigma_\lambda\rightarrow Q\times\Dbox\times\{L,R\}\times\Sigma_\lambda$ is a transition function; and
    \item $\qin\in Q$ is its \emph{initial state}.
\end{enumerate}
\end{definition}

The contents of the tape of an ITM may be represented by an element of $(\Dbox)^{*}$\fullversion{ (the set of finite sequences of symbols in $\Dbox$)}. We denote by $\check{\Dbox}=\{\check{d}\mid d\in\Dbox\}$ the set of \emph{marked} symbols; a \emph{tape instance} is a sequence $\delta\in(\Dbox\cup\check{\Dbox})^{*}$ such that $\delta$ contains exactly one element of $\check{\Dbox}$. The marker indicates the position of the tape head.
\fullversion{
Intuitively the transition $(q,\delta)\step{i/o}_{\I}(q',\delta')$ means that whenever the ITM is in state $q$, its tape head reads the symbol $d$, and input symbol $i$ is offered at its input port, then it replaces the symbol $d$ by the symbol $e$ on its tape, moves the tape head one position in the direction of $M$, produces the output symbol $o$ at its output port and then continues in state $q'$.}

A \emph{computation} of an ITM $\I=(Q,\step{}_{\I},\qin)$ is an infinite sequence of transitions
\begin{equation} \label{eq:itmcomputation}
  (\qin,\check{\Box})=(q_0,\delta_0)\step{i_0/o_0}_{\I}(q_1,\delta_1)\step{i_1/o_1}_{\I}\cdots (q_k,\delta_k)\step{i_k/o_k}_{\I}\cdots
\enskip.
\end{equation}
The \emph{input stream} \delete{\todo{It may not be a stream, but a finite sequence; we may want to adapt terminology.}} associated with the computation in \eqref{eq:itmcomputation} is obtained from $i_0,i_1,\dots$ by omitting all occurrences of $\lambda$, and the \emph{output stream} associated with the computation in \eqref{eq:itmcomputation} is obtained from $o_0,o_1,\dots$ by omitting all occurrences of $\lambda$. A pair $(\vec{x},\vec{y})\in\Sigma^{\omega}\times\Sigma^{\omega}$ is an \emph{interaction pair} associated with $\I$ if there exists a computation of $\I$ with $\vec{x}$ as input stream and $\vec{y}$ as output stream. The set of all interaction pairs associated with an ITM $\I$ is called its \emph{interactive behaviour}. (In Section~\ref{sec:itm} we shall present a more refined view on its behaviour when we associate with every ITM a transition system.) The computation in \eqref{eq:itmcomputation} is \emph{interactive} if, for all $k\in\mathbb{N}$, if $i_k\neq\lambda$, then there exists $\ell\geq k$ such that $o_{\ell}\neq\lambda$. The computation in \eqref{eq:itmcomputation} is \emph{input-active} if $i_k\neq\lambda$ for all $k\in\mathbb{N}$.

An ITM satisfies the \emph{interactiveness} condition if all its computations are interactive. Clearly, if a deterministic ITM $\I$ satisfies the interactiveness condition, then its interactive behaviour is total, in the sense that for every $\vec{x}\in\Sigma^{\omega}$ there is at least one $\vec{y}\in\Sigma^{\omega}$ such that $(\vec{x},\vec{y})$ is an interaction pair of $\I$. By confining our attention to the input-active computations---which, in the terminology of \cite{vLW2006}, corresponds to adopting the full environmental activity postulate---, we may then associate with every such ITM an $\omega$-translation: we say that ITM $\I$ produces $\vec{y}$ on input $\vec{x}$ if $(\vec{x},\vec{y})$ is the interaction pair associated with an input-active computation of $\I$.

\begin{definition}
  An $\omega$-translation $\phi:\Sigma^{\omega}\rightarrow\Sigma^{\omega}$ is \emph{interactively computable} if there exists a deterministic ITM that satisfying the interactiveness condition that produces $\phi(\vec{x})$ on input $\vec{x}$ for all $\vec{x}\in\Sigma^{\omega}$.
\end{definition}

\infull{
Van Leeuwen and Wiedermann present in~\cite{vLW2006} a characterisation of the interactively computable $\omega$-translations by showing that they can be approximated by classically computable partial functions on finite sequences over $\Sigma$.
For finite and infinite sequences $\vec{x}$ and $\vec{y}$, we write $\vec{x}\prec\vec{y}$ if $\vec{x}$ is a finite and strict prefix of $\vec{y}$, and $\vec{x}\preceq\vec{y}$ if $\vec{x}\prec\vec{y}$ or $\vec{x}=\vec{y}$. We use the following definition of monotonic functions and limit-continuous functions.
\begin{definition}\label{def:mono-limit-cont}
\begin{enumerate}
\item A partial function $f:\Sigma^{*}\rightharpoonup\Sigma^{*}$ is \emph{monotonic} if for all $\vec{x},\vec{y}\in\Sigma^{*}$ such that $\vec{x}\prec\vec{y}$ and $f(\vec{y})$ is defined, it holds that $f(\vec{x})$ is defined as well and $f(\vec{x})\preceq f(\vec{y})$.
\item A partial function $\phi: \Sigma^{\omega}\rightarrow\Sigma^{\omega}$ is called \emph{limit-continuous} if there exists a classically computable monotonic partial function $f:\Sigma^{*}\rightarrow\Sigma^{*}$ such that
  $\phi(\lim_{k\rightarrow\infty}\vec{x}_k)=\lim_{k\rightarrow\infty}f(\vec{x}_k)$
for all strictly increasing chains $\vec{x}_1\prec \vec{x}_2\prec\cdots\prec \vec{x}_k\prec\cdots$ with $\vec{x}_k\in\Sigma^{*}$.
\end{enumerate}
\end{definition}}

In~\cite{vLW2006} a criterion of the interactively computable $\omega$-translations is presented by using limit-continuous functions\notinfull{ (a formal definition is included in the Appendix)}.

\begin{theorem}\label{thm:int-trans}
  A total $\omega$-translation is interactively computable iff it is limit-continuous.
\end{theorem}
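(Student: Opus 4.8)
The plan is to prove the two implications separately, in each case arguing directly from the definition of limit-continuity (Definition~\ref{def:mono-limit-cont}).

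For the direction from interactive computability to limit-continuity, fix a deterministic ITM $\I$ satisfying the interactiveness condition that produces $\phi(\vec{x})$ on input $\vec{x}$ for every $\vec{x}\in\Sigma^{\omega}$. Since $\I$ is deterministic, for each finite $\vec{u}=u_0\cdots u_{n-1}\in\Sigma^{*}$ there is a unique length-$n$ prefix of an input-active computation in which the symbols $u_0,\dots,u_{n-1}$ are offered at the input port; define $f(\vec{u})$ to be the finite string obtained from the $n$ output symbols produced along this prefix by deleting all occurrences of $\lambda$. This $f$ is total and computable (simulate $n$ steps of $\I$). Monotonicity is immediate from determinism: if $\vec{u}\prec\vec{v}$, the computation prefix for $\vec{v}$ extends the one for $\vec{u}$, so $f(\vec{u})\preceq f(\vec{v})$. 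For the limit condition, let $\vec{x}_1\prec\vec{x}_2\prec\cdots$ be a strictly increasing chain with limit $\vec{x}$; each $\vec{x}_k$ is a prefix of $\vec{x}$, so $f(\vec{x}_k)$ is exactly the prefix of the output stream $\phi(\vec{x})$ produced during the first $|\vec{x}_k|$ steps of the input-active computation of $\I$ on $\vec{x}$. By interactiveness applied along this computation infinitely many non-$\lambda$ outputs occur, so $|f(\vec{x}_k)|\to\infty$ and hence $\lim_{k\to\infty}f(\vec{x}_k)=\phi(\vec{x})=\phi(\lim_{k\to\infty}\vec{x}_k)$.

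For the converse, assume $\phi$ is total and let $f:\Sigma^{*}\rightharpoonup\Sigma^{*}$ be a computable monotonic partial function witnessing limit-continuity. First I would record a key preliminary observation: for every $\vec{x}\in\Sigma^{\omega}$, $f$ is defined on all finite prefixes of $\vec{x}$, and $|f(x_0\cdots x_n)|\to\infty$. Indeed, taking the chain of all finite prefixes of $\vec{x}$, its limit is $\vec{x}$, so $\lim_{n\to\infty}f(x_0\cdots x_{n-1})=\phi(\vec{x})$, which is infinite because $\phi$ is total; since by monotonicity $f$ being undefined on one prefix forces it to be undefined on every longer prefix, $f$ must in fact be defined on all of them, and as the values $f(x_0\cdots x_n)$ form a $\preceq$-increasing chain with infinite supremum their lengths are unbounded. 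I would then build an ITM $\I$ which, on an input-active computation with input $\vec{x}$, repeatedly reads the next input symbol into a buffer on its work tape while dovetailing this with a step-by-step simulation of the classical machine for $f$ on the prefix read so far, and each time that simulation halts with a value properly extending the output produced so far it emits the additional symbols, one per step, from a pending-output buffer. Monotonicity guarantees these emissions are mutually consistent, and the preliminary observation guarantees both that the output stream equals $\phi(\vec{x})$ and that every non-$\lambda$ input is eventually followed by a non-$\lambda$ output, so $\I$ satisfies interactiveness.

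The main obstacle is the bookkeeping in this last construction together with the partiality of $f$. Because an ITM must consume one input symbol per step in an input-active computation, the machine cannot pause to run $f$; it is forced to interleave reading input, step-by-step simulation of the machine for $f$ on successively longer prefixes, and a buffered one-symbol-per-step output phase, and one must check that this never blocks. The point making it sound is precisely the preliminary observation: on every input stream $f$ is defined on all prefixes, so each subsidiary simulation halts and the machine never gets stuck. This is exactly what is needed for interactiveness to hold along \emph{all} computations of $\I$, and it is where totality of $\phi$ is essential.
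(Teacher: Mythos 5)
This theorem is not proved in the paper: it is quoted from van Leeuwen and Wiedermann~\cite{vLW2006}, so there is no in-paper argument to compare against, and your proposal has to stand on its own. Your left-to-right direction does: determinism of $\I$ gives a unique computation prefix for each finite input word, the resulting $f$ is total, computable and monotonic, and interactiveness applied along the input-active computation on $\vec{x}$ is exactly what makes $|f(\vec{x}_k)|\to\infty$, so $\lim_k f(\vec{x}_k)=\phi(\vec{x})$.

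The gap is in the converse, at the point where you claim interactiveness ``along \emph{all} computations of $\I$.'' The paper's interactiveness condition quantifies over all computations, not only the input-active ones, and your argument only covers computations whose input stream is infinite: there your preliminary observation (that $f$ is defined on all prefixes of an infinite stream and $|f(\cdot)|\to\infty$ along them) indeed guarantees that outputs keep coming. But consider a computation in which the environment offers one non-$\lambda$ symbol $u_0$ and then $\lambda$ forever. Your machine only ever runs $f$ on $\varepsilon$ and on $u_0$, and nothing forces $f(u_0)$ to be nonempty; for instance, for $\phi(x_0x_1x_2\cdots)=x_1x_3x_5\cdots$ the natural witness has $f(x_0)=\varepsilon$, and no machine can emit a semantically correct first output symbol after seeing only $x_0$. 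Your machine therefore emits no non-$\lambda$ output on that computation, and interactiveness fails already for step $0$; the preliminary observation cannot help because the prefix read never grows. The repair is small but must be made explicit: upon receiving $\lambda$ while some earlier non-$\lambda$ input is still unanswered, emit a dummy output symbol. This is harmless, since such computations are not input-active and hence do not contribute to the $\omega$-translation $\I$ realises, while it restores interactiveness for every computation whose supply of real input symbols dries up.
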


\subsection{The Theory of Executability}
\fullversion{
\todo{Reorganise this section (possibly reusing some text to found elsewhere in the paper). Its main purpose is to briefly recap the theory of executability introduced in \cite{BLT2013} (as far as we need it in this paper). I propose the following structure for this section:
\begin{enumerate}
  \item A brief account of the motivation for the theory of executability: it tries to combine computation and concurrency-style interaction in such a way that both are treated on equal footing; thus, an integration of computability and concurrency theory is realised.
  \item Introduce transition systems as the mathematical representation of behaviour.
  \item Introduce reactive Turing machines in order to define the notion of executable transition system. Note that some of the notions (e.g., how to deal with the tape, what is a configuration, etc.) have already been introduced and explained in the preceding subsection, so we need not elaborate to much about these issues and can refer back to some extent.
  \item Explain that transition systems are generally considered up to some behavioural equivalence relation, and introduce divergence preserving branching bisimilarity as the finest useful notion.
  \item As an illustration (and to have it available for the remainder of the paper), establish the result that reactive Turing machines with 'stay-transitions' (no movement of the tape head) are not more powerful than reactive Turing machines without, up to divergence-preserving branching bisimilarity.
\end{enumerate}}
}
The theory of executability combines computation and concurrency-style interaction in such a way that both are treated on equal footing; thus, an integration of computability and concurrency theory is realised.

The transition system is the central notion in the mathematical theory of discrete-event behaviour. It is parameterised by a set $\A$ of \emph{action symbols}, denoting the observable events of a system. \delete{We shall later impose extra restrictions on $\A$, e.g., requiring that it be finite or has a particular structure, but for now we let $\A$ be just an arbitrary abstract set.} We extend $\A$ with a special symbol $\tau$, which intuitively denotes unobservable internal activity\delete{ of the system}. We shall abbreviate $\A \cup\{\tau\}$ by $\Atau$.

\begin{definition}\label{def:lts}
An \emph{$\Atau$-labelled transition system} $\T$ is a triple $(\Sta,\step{},\uparrow)$, where,
\begin{enumerate}
    \item $\Sta$ is a set of \emph{states},
    \item ${\step{}}\subseteq\Sta\times\Atau\times\Sta$ is an $\Atau$-labelled \emph{transition relation},
    \item ${\uparrow}\in\Sta$ is the initial state.
\end{enumerate}
\fullversion{If $(s,a,t)\in{\step{}}$, then we write $s\step{a} t$.}
\end{definition}%
%

Transition systems can be used to give semantics to programming languages and process calculi. The standard method is to first associate with every program or process expression a transition system (its operational semantics), and then consider programs and process expressions modulo one of the many behavioural equivalences on transition systems that have been studied in the literature. In this paper, we shall use the notion of (divergence-preserving) branching bisimilarity \cite{Glabbeek1996,Glabbeek2009}, which is the finest behavioural equivalence in van Glabbeek's linear time - branching time spectrum~\cite{Glabbeek1993} that abstracts from internal computation steps (represented in the transition system by transitions labelled with $\tau$).\notinfull{ We adopt the notation $\bbisimd$ for divergence-preserving branching bisimilarity, and $\bbisim$ for the divergence-insensitive version (see the Appendix for a formal definition)}.
\infull{

In the definition of (divergence-preserving) branching bisimilarity we need the following notation: let $\step{}$ be an $\Atau$-labelled transition relation on a set $\Sta$, and let $a\in\Atau$; we write $s\step{(a)}t$ for ``$s\step{a}t$'' or ``$a=\tau$ and $s=t$''. Furthermore, we denote the transitive closure of $\step{\tau}$ by $\step{}^{+}$ and the reflexive-transitive closure of $\step{\tau}$ by $\step{}^{*}$.

\begin{definition}[Branching Bisimilarity]\label{def:bbisim}
Let $T_1=(\Sta_1,\step{}_1,\uparrow_1)$ and $T_2=(\Sta_2,\step{}_2,\uparrow_2)$ be transition systems. A \emph{branching bisimulation} from $T_1$ to $T_2$ is a binary relation $\R\subseteq\Sta_1\times\Sta_2$ such that for all states $s_1$ and $s_2$, $s_1\R s_2$ implies
\begin{enumerate}
    \item if $s_1\step{a}_1s_1'$, then there exist $s_2',s_2''\in\Sta_2$, s.t. $s_2\step{}_2^{*}s_2''\step{(a)}s_2'$, $s_1\R s_2''$ and $s_1'\R s_2'$;
    \item if $s_2\step{a}_2s_2'$, then there exist $s_1',s_1''\in\Sta_1$, s.t. $s_1\step{}_1^{*}s_1''\step{(a)}s_1'$, $s_1''\R s_2$ and $s_1'\R s_2'$.
\end{enumerate}
The transition systems $T_1$ and $T_2$ are \emph{branching bisimilar} (notation: $T_1\bbisim T_2$) if there exists a branching bisimulation $\R$ from $T_1$ to $T_2$ s.t. $\uparrow_1\R\uparrow_2$.

A branching bisimulation $\R$ from $T_1$ to $T_2$ is \emph{divergence-preserving} if, for all states $s_1$ and $s_2$, $s_1\R s_2$ implies
\begin{enumerate}
\setcounter{enumi}{2}
    \item if there exists an infinite sequence $(s_{1,i})_{i\in\mathbb{N}}$ s.t. $s_1=s_{1,0},\,s_{1,i}\step{\tau}s_{1,i+1}$ and $s_{1,i}\R s_2$ for all $i\in\mathbb{N}$, then there exists a state $s_2'$ s.t. $s_2\step{}^{+}s_2'$ and $s_{1,i}\R s_2'$ for some $i\in\mathbb{N}$; and
    \item if there exists an infinite sequence $(s_{2,i})_{i\in\mathbb{N}}$ s.t. $s_2=s_{2,0},\,s_{2,i}\step{\tau}s_{2,i+1}$ and $s_1\R s_{2,i}$ for all $i\in\mathbb{N}$, then there exists a state $s_1'$ s.t. $s_1\step{}^{+}s_1'$ and $s_1'\R s_{2,i}$ for some $i\in\mathbb{N}$.
\end{enumerate}
The transition systems $T_1$ and $T_2$ are \emph{divergence-preserving branching bisimilar} (notation: $T_1\bbisim^{\Delta}T_2$) if there exists a divergence-preserving branching bisimulation $\R$ from $T_1$ to $T_2$ s.t. $\uparrow_1\R\uparrow_2$.
\end{definition}}

\delete{\todo{Omit the notion of weak trace equivalence presented below. Instead, we should define (later, when we are going to discuss which RTMs are suitable for stream translation) what are the computations associated with a transition system and define the stream translation associated with a transition system similarly as above.}

We also define the weak infinite trace equivalence between transition systems. This relation is useful for classifying the $\omega$-translation realized by transition systems. For any LTS $(\Sta,\step{},\uparrow)$, a state $s\in\Sta$, and an infinite action sequence $\sigma\in\A^{\omega}$, with $\sigma=a_0,a_1,\ldots$, we denote by $s\step{\sigma}$ the fact that there exist $s_0,s_0'\ldots\in\Sta$ such that $s=s_0$, and $s_i\step{}^{*}s_i'\step{a_i}s_{i+1}$ for all $i\geq 0$.

\begin{definition}[Weak Infinite Trace Equivalence]\label{def:wtequ}
For any LTS $T=(\Sta,\step{},\uparrow)$ and state $s\in\Sta$, we define $\witr(s)$ to be the set of weak infinite traces possible from $s$
\begin{equation*}
\witr(s)=\{\sigma\in\A^{\omega}\mid s\step{\sigma}\}
\enskip.
\end{equation*}

Let $T_1=(\Sta_1,\step{}_1,\uparrow_1)$ and $T_2=(\Sta_2,\step{}_2,\uparrow_2)$ be transition systems. We say that $T_1$ and $T_2$ are weak trace equivalent iff $\witr(\uparrow_1)=\witr(\uparrow_2)$, denoted by $T_1\witequ T_2$.
\end{definition}}

\delete{\subsection{Reactive Turing Machine and Executability}

\todo{Integrate the material in this subsection in the previous subsection.}}

The notion of reactive Turing machine (RTM) was put forward in \cite{BLT2013} to mathematically characterise which behaviour is executable by a conventional computing system. We recall the definition of RTMs and the ensued notion of executable transition system. \delete{The definition of RTMs is parameterised with the set $\Atau$, which we  now assume to be a finite set. Furthermore, the definition is parameterised with another finite set $\D$ of \emph{data symbols}. We extend $\D$ with a special symbol $\Box\notin\D$ to denote a blank tape cell, and denote the set $\D\cup\{\Box\}$ of \emph{tape symbols} by $\Dbox$.}
\begin{definition}\label{def:rtm}
A \emph{reactive Turing machine} (RTM) $\M$ is a triple $(\Sta,\step{},\uparrow)$, where
\begin{enumerate}
    \item $\Sta$ is a finite set of \emph{states},
    \item ${\step{}}\subseteq \Sta\times\Dbox\times\Atau\times\Dbox\times\{L,R\}\times\Sta$ is a $(\Dbox\times\Atau\times\Dbox\times\{L,R\})$-labelled \emph{transition relation} (we write $s\step{a[d/e]M}t$ for $(s,d,a,e,M,t)\in{\step{}}$),
    \item ${\uparrow}\in\Sta$ is a distinguished \emph{initial state}.
\end{enumerate}
\end{definition}

Intuitively, the meaning of  a transition $s\step{a[d/e]M}t$ is that whenever $\M$ is in state $s$, and $d$ is the symbol currently read by the tape head, then it may execute the action $a$, write symbol $e$ on the tape (replacing $d$), move the read/write head one position to the left or the right on the tape, and then end up in state $t$.

To formalise the intuitive understanding of the operational behaviour of RTMs, we associate with every RTM $\M$ an $\Atau$-labelled transition system  $\T(\M)$. The states of $\T(\M)$ are the configurations of $\M$, pairs consisting of a state and a tape instance.
\fullversion{We adopt a convention to concisely denote new placement of the tape head marker. Let $\delta$ be an element of $\Dbox^{*}$. Then by $\tphdL{\delta}$ we denote the element of $(\Dbox\cup\check{\Dbox})^{*}$ obtained by placing the tape head marker on the right-most symbol of $\delta$ (if it exists), and $\check{\Box}$ otherwise.
Similarly $\tphdR{\delta}$ is obtained by placing the tape head marker on the left-most symbol of $\delta$ (if it exists), and $\check{\Box}$ otherwise.  A configuration of an ITM consists of pair $(q,\delta)$ of a state and a tape instance. We write $(q,\delta)\step{i/o}_{\I}(q',\delta')$ if ${\step{}_{\I}}(q,d,i)=(q',e,M,o)$, there is an occurrence of the marked version $\check{d}$ of $d$ in $\delta$, $\delta'$ is obtained from $\delta$ by replacing $\check{d}$ by $e$ and replacing the symbol left or right of $\check{d}$ (depending on whether $M=L$ or $M=R$) by its marked version.}

\begin{definition}\label{def:lts-tm}
Let $\M=(\Sta,\step{},\uparrow)$ be an RTM. The transition system $\T(\M)$ \emph{associated with} $\M$ is defined as follows:
\begin{enumerate}
    \item its set of states $\Sta$ consists of the set of all configurations of $\M$;
    \item its transition relation $\step{}$ is the least relation satisfying, for all $a\in\Atau,\,d,e\in\Dbox$ and $\delta_L,\delta_R\in\Dbox^{*}$:
    \begin{itemize}
        \item $(s,\delta_L\check{d}\delta_R)\step{a}(t,\tphdL{\delta_L}e\delta_R)$ iff $s\step{a[d/e]L}t$, and
        \item $(s,\delta_L\check{d}\delta_R)\step{a}(t,\delta_L e\tphdR{\delta_R})$ iff $s\step{a[d/e]R}t$
    \end{itemize}
    ($\tphdL{\delta_L}$ is obtained from $\delta_L$ by placing the tape head marker on the right-most symbol in $\delta_L$, and $\tphdR{\delta_R}$ is obtained analogously from $\delta_R$);
    \item its initial state is the configuration $(\uparrow,\check{\Box})$.
\end{enumerate}
\end{definition}

Turing introduced his machines to define the notion of \emph{effectively computable function} in~\cite{Turing1936}. By analogy, we have a notion of \emph{effectively executable behaviour}~\cite{BLT2013}.

\begin{definition}\label{def:exe}
A transition system is \emph{executable} if it is the transition system associated with some RTM.
\end{definition}
\fullversion{
For the convenience of proofs, we also introduce the RTMs with rules of the form $s\step{a[d/e]N}t$, where $N$ means no movement of the tape head. We denote such machines as RTM$^{N}$. An observation is that one can simulate the transition system of an RTM$^{N}$ by an RTM up to divergence-preserving branching bisimilarity.

\begin{lemma}~\label{lemma:RTMN}
The transition system associated with an RTM$^{N}$ is executable modulo divergence-preserving branching bisimilarity.
\end{lemma}}

\section{Executability of Interactive Turing Machines}~\label{sec:itm}

In this section we associate a transition system with every ITM, and
then prove that it is executable modulo divergence-preserving
branching bisimilarity. It is convenient to consider input and output
as separate actions in the transition system associated with an
ITM. We denote by $?i$ the action of inputting the
symbol $i\in\Sigma$, and by $!o$ the action of outputting the symbol $o\in\Sigma$.

\begin{definition}\label{def:lts-itm}
Let $\I=(Q,\step{}_{\I},\qin)$ be an ITM.
The transition system $\T(\I)$ associated with $\I$ is defined as follows:
\begin{enumerate}
\item its set of states is the set
   $\{(s,\delta)\mid s\in Q\cup\{s_o\mid o\in\Sigma_{\lambda}, s\in Q\},\,\delta\mbox{ is a tape instance}\}$;
\item its transition relation $\step{}$ is the least relation
  satisfying, for all $i,o\in\Sigma_{\lambda}$, $d,e\in\Dbox$, and
  $\delta_L,\delta_R\in\Dbox^{*}$:
  \begin{itemize}
  \item
    $(s,\delta_L\check{d}\delta_R)\step{?i}(t_o,\tphdL{\delta_L}e\delta_R)$
      iff $(s,d,i)\step{}_{\I}(t,e,L,o)$ and $i\in\Sigma$,
  \item
    $(s,\delta_L\check{d}\delta_R)\step{?i}(t_o,\delta_L
    e\tphdR{\delta_R})$ iff $(s,d,i)\step{}_{\I}(t,e,R,o)$ and $i\in\Sigma$,
 \item
    $(s,\delta_L\check{d}\delta_R)\step{\tau}(t_o,\tphdL{\delta_L}e\delta_R)$
      iff $(s,d,i)\step{}_{\I}(t,e,L,o)$ and $i=\lambda$,
  \item
    $(s,\delta_L\check{d}\delta_R)\step{\tau}(t_o,\delta_L
    e\tphdR{\delta_R})$ iff $(s,d,i)\step{}_{\I}(t,e,R,o)$ and $i=\lambda$,
   \item
    $(s_o,\delta)\step{!o}(s,\delta)$ iff $o\in\Sigma$, and
    $(s_o,\delta)\step{\tau}(s,\delta)$ iff $o=\lambda$.
  \end{itemize}
\item its initial state is the configuration $(\qin,\check{\Box})$.
\end{enumerate}
\end{definition}

The following theorem shows that every transition systems associated
with an ITM can be simulated by an RTM.\notinfull{ (A proof of the
  theorem is included in the Appendix.)}
\infull{
In the proof it is convenient to allow RTMs to have transitions of the form $s\step{a[d/e]S}t$, where $S$ is a stay transition with no movement of the tape head. We refer to such machines as RTMs with stay transitions. The operational semantics of RTMs can be extended to an operational semantics for RTMs with stay transitions by adding the clause: $(s,\delta_L\check{d}\delta_R)\step{a}(t,\delta_L \check{e}\delta_R)$ iff $s\step{a[d/e]S}t$. The transition system of an RTM with stay transitions can be simulated by an RTM up to divergence-preserving branching bisimilarity.

\begin{lemma}~\label{lemma:RTMN}
The transition system associated with an RTM with stay transitions is executable up to divergence-preserving branching bisimilarity.
\end{lemma}

\begin{proof}
We suppose that $\M=(\Sta,\step{},\uparrow)$ is an RTM with stay transitions, and its transition system is $\T(\M)$. We define a normal RTM $\M'=(\Sta_1,\step{}_1,\uparrow_1)$ that simulates $\T(\M)$ as follows:

\begin{enumerate}
\item $\Sta_1=\Sta\cup\{s_t\mid s,t\in\Sta\}$;
\item $s\step{a[d/e]L}_1 t$ iff $s\step{a[d/e]L}t$;
\item $s\step{a[d/e]R}_1 t$ iff $s\step{a[d/e]R}t$;
\item $s\step{a[d/e]L}_1 s_t$ and $s_t\step{\tau[d/d]R}_1t$ iff $s\step{a[d/e]S}t$; and
\item $\uparrow_1=\uparrow$.
\end{enumerate}

Then it is straight forward to $\T(\M')\bbisimd\T(\M)$.
\end{proof}
}

\begin{theorem}\label{thm:itm-lts-iso}
For every ITM $\I$ there exists an RTM $\M$, such that $\T(\I)\bbisimd \T(\M)$.
\end{theorem}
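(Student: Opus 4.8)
The plan is to construct an RTM $\M$ that directly mimics the transition system $\T(\I)$ of Definition~\ref{def:lts-itm}, and to exhibit a divergence-preserving branching bisimulation between the two associated transition systems. The first observation is that $\T(\I)$ is essentially already in ``RTM shape'': its states are pairs of a control state and a tape instance, and its transitions read and rewrite a tape cell and move the head, just as RTM configurations do. The only mismatch is that $\T(\I)$ has some transitions that do not move the head (namely the final clause of Definition~\ref{def:lts-itm}, $(s_o,\delta)\step{!o}(s,\delta)$ and $(s_o,\delta)\step{\tau}(s,\delta)$, and arguably also the way input and output are split across two steps). So the first step is to define an RTM \emph{with stay transitions} $\M_0$ whose control states are $Q \cup \{s_o \mid o\in\Sigma_\lambda,\ s\in Q\}$, the same set used in Definition~\ref{def:lts-itm}, and whose transition rules are read off clause-by-clause from that definition: for each ITM transition $(s,d,i)\step{}_{\I}(t,e,M,o)$ put a rule $s\step{?i[d/e]M}t_o$ when $i\in\Sigma$ and $s\step{\tau[d/e]M}t_o$ when $i=\lambda$; and for each $s_o$ put the stay rule $s_o\step{!o[d/d]S}s$ when $o\in\Sigma$ and $s_o\step{\tau[d/d]S}s$ when $o=\lambda$ (for every $d\in\Dbox$). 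Taking $\uparrow\ =\ \qin$, the transition system $\T(\M_0)$ is then \emph{literally identical} to $\T(\I)$: the configurations coincide, the initial states coincide ($(\qin,\check\Box)$), and each defining clause of $\step{}$ matches, using the fact that $\Sta$ is required finite in Definition~\ref{def:rtm} — this holds because $Q$ is finite and $\Sigma_\lambda$ is finite (here one also needs $Q$ to be finite, which is the standard assumption for ITMs; if the paper allows infinite $Q$ one restricts to the reachable part, which is countable but for a deterministic ITM still needs care — see the obstacle below). Finally, apply Lemma~\ref{lemma:RTMN}: since $\T(\M_0)$ is the transition system of an RTM with stay transitions, it is executable up to $\bbisimd$, i.e. there is a genuine RTM $\M$ with $\T(\M)\bbisimd\T(\M_0) = \T(\I)$, and transitivity of $\bbisimd$ gives $\T(\I)\bbisimd\T(\M)$ as required.

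In more detail, the bisimulation witnessing $\T(\M_0) = \T(\I)$ is simply the identity relation on configurations; one checks the two transfer conditions of Definition~\ref{def:bbisim} trivially (each transition on one side is the same transition on the other), and the divergence conditions hold vacuously in the sense that an infinite $\tau$-path on one side is literally an infinite $\tau$-path on the other, with the same states. No abstraction is needed at this level; the only real work with $\bbisimd$ is hidden inside Lemma~\ref{lemma:RTMN}, which is already proved in the excerpt and which handles the one genuinely nontrivial point: replacing a stay transition $s\step{a[d/e]S}t$ by the two-step detour $s\step{a[d/e]L}s_t\step{\tau[d/d]R}t$, whose intermediate $\tau$-step is collapsed by branching bisimilarity without creating divergence (since each stay transition is replaced by exactly one such detour, no infinite $\tau$-chain is introduced where there was none).

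The step I expect to be the main obstacle — or at least the only point requiring care rather than routine bookkeeping — is the finiteness of the state set of $\M_0$, which Definition~\ref{def:rtm} demands. If the ITM's $Q$ is finite (the customary assumption, and the one implicit in van Leeuwen–Wiedermann's model since an ITM extends an ordinary Turing machine), then $Q \cup \{s_o\}$ has size $|Q|\cdot(|\Sigma|+2)$, which is finite, and there is nothing to do. If one wanted to allow $Q$ countably infinite, $\M_0$ would not be an RTM on the nose; one would then need to first note that only countably many configurations are reachable and re-encode the reachable control states on the tape, which is a standard but tedious construction. I would simply state the finiteness assumption on $Q$ explicitly (consistent with Definition~\ref{def:itm} being ``adapted from'' a Turing-machine-style definition) and thereby sidestep this issue. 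Everything else — matching the five clauses of Definition~\ref{def:lts-itm} against the rules of $\M_0$, and invoking Lemma~\ref{lemma:RTMN} plus transitivity of $\bbisimd$ — is mechanical.
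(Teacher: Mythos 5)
Your proposal is correct and follows essentially the same route as the paper: both construct an RTM with stay transitions whose control states are $Q\cup\{s_o\mid o\in\Sigma_\lambda,\,s\in Q\}$, observe that its associated transition system coincides pointwise with $\T(\I)$, and then invoke Lemma~\ref{lemma:RTMN} to eliminate the stay transitions up to $\bbisimd$. Your version is in fact slightly more careful than the paper's on two minor points (explicitly mapping $i=\lambda$ and $o=\lambda$ to $\tau$-labels so that the clause-by-clause match with Definition~\ref{def:lts-itm} really holds, and flagging the implicit finiteness of $Q$), but these are refinements of the same argument, not a different one.
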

\infull{
    We let $\I=(Q,\step{}_{\I},\qin)$ be an ITM. By Lemma~\ref{lemma:RTMN}, it is enough to show that there exists an RTM with stay transitions $\M$ satisfying $\T(\M)\bbisimd\T(\I)$.
    We construct $\M=(\Sta,\step{},\uparrow)$ as follows:
    \begin{enumerate}
        \item $\Sta=\In\cup\Out$, where $\In=Q$ and $\Out=\{s_o\mid o\in\Sigma_{\lambda}, s\in Q\}$;
        \item the transition relation $\step{}$ is defined by:
        $s\step{\mathit{in}(i)[d/e]M}t_o$ if $(s,d,i)\step{}_{\I}(t,e,M,o)$, and $s_o\step{\mathit{out}(o)[e/e]S}s$ for all $s\in\Sta$, $o\in\Sigma_{\lambda}$; and
        \item $\uparrow=\qin$.
    \end{enumerate}
    Then according to Definitions~\ref{def:lts-tm} and~\ref{def:lts-itm}, we get a transition system $\T(\M)=\T(\I)$, where `=' is the pointwise equality, which also implies $\T(\M)\bbisimd \T(\I)$.
}
As a consequence we have the following corollary.

\begin{corollary}\label{cor:itm-lts-exe}
The transition system associated with an ITM is executable modulo divergence-preserving branching bisimilarity.
\end{corollary}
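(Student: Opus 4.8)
The plan is to derive the corollary directly from Theorem~\ref{thm:itm-lts-iso}. First I would recall (Definition~\ref{def:exe}) that a transition system is \emph{executable} when it equals the transition system $\T(\M)$ of some RTM $\M$; accordingly, a transition system is \emph{executable modulo divergence-preserving branching bisimilarity} exactly when it is divergence-preserving branching bisimilar to $\T(\M)$ for some RTM $\M$.

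Then I would fix an arbitrary ITM $\I$ and apply Theorem~\ref{thm:itm-lts-iso}, which yields an RTM $\M$ with $\T(\I) \bbisimd \T(\M)$. Since $\T(\M)$ is executable by definition and $\bbisimd$ is an equivalence relation on transition systems, it follows that $\T(\I)$ is divergence-preserving branching bisimilar to an executable transition system, and hence is itself executable modulo $\bbisimd$; as $\I$ was arbitrary, the corollary follows. The only auxiliary facts needed are that $\bbisimd$ is reflexive, symmetric and transitive --- standard properties of divergence-preserving branching bisimilarity, transitivity amounting to the observation that the relational composition of two divergence-preserving branching bisimulations is again one.

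I do not expect any real obstacle at the level of the corollary: all the substance sits in Theorem~\ref{thm:itm-lts-iso}, whose proof turns each ITM step $(s,d,i) \step{}_{\I} (t,e,M,o)$ into an input-move transition $s \step{\mathit{in}(i)[d/e]M} t_o$ landing in a fresh ``pending-output'' state $t_o$, and turns each such state into a stay-transition $s_o \step{\mathit{out}(o)[e/e]S} s$ (reading $\mathit{in}(\lambda)$ and $\mathit{out}(\lambda)$ as $\tau$), so that $\T(\M) = \T(\I)$ as transition systems, and then removes the stay-transitions via Lemma~\ref{lemma:RTMN}. Were that theorem to be re-established as part of the argument, the one point I would be careful to check is that the constructed $\M$ meets the finiteness requirement of Definition~\ref{def:rtm}: this holds because $Q$ and $\Dbox$ are finite --- they form the domain of the ITM's transition function --- so $\Sta = \In \cup \Out$ has at most $|Q| + |\Sigma_\lambda| \cdot |Q|$ states.
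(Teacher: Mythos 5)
Your proposal is correct and matches the paper exactly: the corollary is stated as an immediate consequence of Theorem~\ref{thm:itm-lts-iso}, with no further argument needed beyond unfolding the definition of executability modulo $\bbisimd$. Your additional remarks (the equivalence properties of $\bbisimd$, the finiteness of the constructed state set) are sound but not part of the paper's one-line derivation.
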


\section{Executable $\omega$-Translations}\label{sec:exeomega}
\delete{
\todo{The goal of this section is to prove that a stream translation is executable if, and only if, it is interactively computable. To realise this goal, we have two subsections, one for the implication from left to right; the other for the implication from right to left.

The first subsection establishes the implication from left to right. It proceeds according to the following program:
\begin{enumerate}
  \item Explain how RTMs can be thought of as stream translators. We do not want to modify or extend the definition; we only want to put restrictions. One such restriction is that it has a limited interface, with one input channel and one output channel over which symbols from $\Sigma$ can be communicated with the environment. More restrictions may be derived from the assumptions of van Leeuwen and Wiedermann (interactiveness, full environmental activity, etc.). And perhaps we want the RTMs to implement a strictly alternating regime of input and output: its states are strictly partitioned into input and output states (but we do allow $\tau$-transitions). Furthermore, its output states should be unambiguous. (This amounts to formulating Definition~{def:i-o-lts} for RTMs rather than for transition systems. But I think we should not, from the start, impose the `Reactive in response' condition, because it is undecidable. Rather we should, as van Leeuwen and Wiedermann do, explain that if the interactiveness condition is satisfied, then it is possible to associate a stream translation with an RTM.)
  \item An RTM with such a restricted interface gives rise to a transition system that can be thought of as the mathematical representation of a set of computations. We should first formally what is a computation in a transition system (see the definition for ITMs for inspiration!). Then we can say what is the input stream and what is the output stream associated with a computation. This gives a definition of interaction pairs (as for ITMs).
  \item Argue that an RTM suitable for stream translation, and satisfying the interactiveness condition, defines a stream translation (via the computations of the transition system). We can then define an executable stream translation as a stream translation associated with an RTM.
  \item Then, establish that executable stream translations are interactively computable by proving that they are limit continuous and using Theorem~\ref{thm:int-trans}.
\end{enumerate}

The second subsection establishes the implication from right to left. It proceeds according to the following program:
\begin{enumerate}
  \item Associate with ITMs a transition system semantics.
  \item Argue that the stream translation associated with the ITM directly is the same as the stream translation derived from the transition system.
  \item Prove that the transition system associated with an ITM is executable (up to divergence-preserving branching bisimilarity).
  \item Explain that divergence-preserving branching bisimilarity preserves computations (perhaps state this result as a lemma) and then conclude that the interactive stream translations are executable.
\end{enumerate}
}
}
\delete{\subsection{RTMs for $\omega$-Translations}~\label{subsec:RTM-omega}}
\delete{
 The interactive computability of ITMs are defined as interactively computable $\omega$-translations, then a nature question is that, does RTMs also give rise to the same class of $\omega$-translations?}
 Recall that an $\omega$-translation is defined to be interactively computable if, and only if, it can be realised by an ITM. RTMs are designed for exhibiting the expressive power of executable transition systems, rather than $\omega$-translations, and not every RTM naturally has an $\omega$-translation associated with it. Imposing some restrictions on the formalism of RTMs, however, we shall define a subclass of RTMs with which an $\omega$-translation is naturally associated. The $\omega$-translation realised by such an RTM is then called \emph{executable}, and we shall establish that an $\omega$-translation is interactively computable if, and only if, it is executable.

By analogy to the systems described in the theory of interactive computation, we let the RTMs for $\omega$-translations execute in steps, in such a way that with every step a pair of input and output actions can be associated. With every infinite computation of the RTM we can then associate a interaction pair, and the RTM will thus give rise to an $\omega$-translation.

 \begin{definition}~\label{def:rtm-omega}
 Let $\Atau=\{?i,!o\mid i,o\in\{0,1\}\}\cup\{\tau\}$, and let $\M=(\Sta,\step{},\uparrow)$ be an RTM with $\Atau$ as its set of labels. Then $\M$ is an \emph{RTM for $\omega$-translation} if it satisfies the following properties:
 \begin{enumerate}
 \item the set of states $\Sta$ is partitioned into disjoint sets $\in$ of input states and $\Exe$ of execution states, i.e., $\Sta=\In\cup\Exe$ and $\In\cap\Exe=\emptyset$;
 \item the initial state ${\uparrow}$ is an input state, i.e., ${\uparrow}\in\In$;
 \item for a transition $s\step{a[d/e]M}t$, if $s\in\In$, then $a\in \{?0,?1\}$ and $t\in\Exe$; if $s\in\Exe$, then $a\in\{!0,!1,\tau\}$ and $t\in \In$; and
 \item for all $(s,d)\in\Exe\times\Dbox$, there is at most one transition of the form $s\step{a[d/e]M}t$; and
 \item for all $(s,d)\in\In\times\Dbox$, there are exactly two transitions of the form $s\step{a[d/e]M}t$, one with $a=?0$ and one with $a=?1$.
 \end{enumerate}
 \end{definition}
 \delete{
 Note that we do not allow $\tau$ actions in an input state, so that this restriction corresponds to full-active.} \delete{We also make a discussion about the machines that incorporate with free environment, where $\tau$ actions are enabled in an input state in the Appendix.}
\delete{
 \textbf{Interface}
 One of such restrictions is that it has a limited interface, with one input channel and one output channel over which symbols from $\Sigma$ can be communicated with the environment. Hence, we take $\Atau=\{?i,!o\mid i,o\in\Sigma\}\cup\{\tau\}$ as the set of action labels.

 \textbf{Execution}

  The machine realises an $\omega$-translation with a stepwise procedure. Each step is divided into an input transition and an execution transition. Therefore, we divide the set of control states $\Sta$ into disjoint sets of input states and execution states, that is, $\Sta=\In\cup\Exe$ and $\In\cap\Exe=\emptyset$.

  For a control state $s\in\In$, the machine can only receive an input symbol from the environment. So it only has the transition rules of the form $s\step{i[d/e]M}t$, where $i\in \{?0,?1\}$ and $t\in\Exe$. We do not allow $\tau$ actions in an input state, so that this restriction corresponds to full-active. We also make a discussion about the machines that incorporate with free environment, where $\tau$ actions are enabled in an input state in Appendix~\ref{appendix}.

  For a control state $s\in\Exe$, the machine either sends an output symbol to the environment, or makes a a step of internal computation. So it only has the transition rules of the form $s\step{o[d/e]M}t$, where $o\in\{!0,!1,\tau\}$ and $t\in \In$. Moreover, to obtain a function over streams, the execution of the machine should be deterministic with respect to certain input. Therefore, we require that for all $(s,d)$ where $s\in\Exe$ and $d\in\Dbox$, there is at most one transition $s\step{o[d/e]M}t$.

  Then an execution sequence of such an RTM $\M$ is $(s_0,\delta_0)\step{i_0}(s_0',\delta_0')\step{o_0}\ldots (s_n,\delta_n)\step{i_n}(s_n',\delta_n)\step{o_n}\ldots$, where $s_0,s_1,\ldots\in\In$ and $s_0',s_1',\ldots\in\Exe$. We get an input stream $\vec{x}$ obtained from $i_0,i_1,\ldots$ and an output stream $\vec{y}$ obtained from $o_1,o_2,\ldots$ by omitting all the occurrence of $\tau$. Then $(\vec{x},\vec{y})\in\Sigma^{\omega}\times\Sigma^{\omega}$ is an interaction pair of $\M$.

  Then we may associate with an RTM for $\omega$-translations $\M$ an $\omega$-translation as we did for ITMs. Moreover, an $\omega$-translation $\phi$ is called an \emph{executable} translation if it can be realised by an RTM for $\omega$-translation.}

  \delete{\todo{By the formalism the computation of RTMs for $\omega$-translations and ITMs, it is straightforward that they realise the same classes of $\omega$-translations.}}

\delete{
\subsection{Obtain $\omega$-Translation from a transition system}~\label{subsec:omega-lts}}
\delete{As mentioned in the theory of van Leeuwen and Wiedermann~\cite{vLW2006}, $\omega$-translations are used as a mathematical characterization of interactive computation. For an alphabet $\Sigma$, we use $\Sigma^{\omega}$ to denote the set of infinite words over $\Sigma$; an $\omega$-translation is a function $\phi:\Sigma^{\omega}\rightarrow\Sigma^{\omega}$.}

\delete{
Note that $\omega$-translations for interactive computation are \emph{total} by definition, since a valid system is always assumed to be able to react to every possible input stream of an environment.
We can get the $\omega$-translation realized by an ITM from its transition system by applying Definition~\ref{def:in-out-lts}, and call such $\omega$-translation \emph{interactively computable}. As a remark, in~\cite{vLW2006} and~\cite{Verbaan2006}, the $\omega$-translation realized by an ITM is defined by the infinite computation sequence of the machine, in which every step is related with an input and output symbol, whereas in our definition, the $\omega$-translation of an ITM is obtained from the weak infinite traces of its transition system. The notion of weak infinite trace just coincides with the infinite computation sequence of the machine. Therefore, we can verify that the definition of $\omega$-translations from Definition~\label{def:in-out-lts} is consistent with the original one for ITMs.}

\delete{\subsection{Executable $\omega$-translations}~\label{subsec:exe-omega}}
 In the following lemma we establish some properties of the transition system associated with an RTM for $\omega$-translation. \notinfull{(See the Appendix for a proof of the lemma.)}

\begin{lemma}\label{lemma:i-o-lts}
Let $\M$ be an RTM for $\omega$-translation. Then $\T(\M)=(\Sta_{\M},\step{}_{\M},\uparrow_{\M})$ satisfies the following properties:
\begin{enumerate}
    \item (\emph{Alternation}) The set of states $\Sta_{\M}$ is partitioned into a set of input states $\In_{\M}$ and a set of output states $\Exe_{\M}$, i.e., $\Sta_{\M}=\In_{\M}\cup\Exe_{\M}$ and $\In_{\M}\cap\Exe_{\M}=\emptyset$. For every transition $s\step{a} s'$, if $s\in\In_{\M}$, then $a\in\{?0,?1\}$ and $s'\in\Exe_{\M}$; if $s\in\Exe_{\M}$, then $a\in\{!0,!1,\tau\}$ and $s'\in\In_{\M}$.
    \item (\emph{Unambiguity}) For every $s\in\Exe_{\M}$, there is exactly one outgoing transition $s\step{a} s'$ with $a\in\{!0,!1,\tau\}$.
    \item (\emph{Totality}) For every $s\in\In_{\M}$, there are exactly two outgoing transitions, labelled with $?0$ and $?1$, respectively.
\end{enumerate}
\end{lemma}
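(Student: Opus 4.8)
The plan is to unwind Definition~\ref{def:lts-tm} (the operational semantics of RTMs) in the light of the five structural constraints on an RTM for $\omega$-translation from Definition~\ref{def:rtm-omega}, and verify that each of the three stated properties of $\T(\M)$ follows. The key observation is that the state component of a configuration of $\T(\M)$ is just a control state of $\M$, and the transitions of $\T(\M)$ are in one-to-one correspondence with instances of the transition rules of $\M$ (matched against whatever symbol is under the tape head), with the action label carried over unchanged. So properties that hold of the \emph{control states} and \emph{rules} of $\M$ lift to properties of the \emph{configurations} and \emph{transitions} of $\T(\M)$.

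First I would define the partition of $\Sta_{\M}$: put a configuration $(s,\delta)$ in $\In_{\M}$ if $s\in\In$, and in $\Exe_{\M}$ if $s\in\Exe$; disjointness and exhaustiveness are inherited from $\In\cap\Exe=\emptyset$ and $\Sta=\In\cup\Exe$. For Alternation, take any transition $(s,\delta)\step{a}(s',\delta')$ of $\T(\M)$; by Definition~\ref{def:lts-tm} it arises from a rule $s\step{a[d/e]M}s'$ of $\M$ (with $d$ the symbol under the head in $\delta$), and then clause~3 of Definition~\ref{def:rtm-omega} gives exactly the required constraints on $a$ and on the side of the partition to which $s'$ belongs. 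For Unambiguity, given $(s,\delta)\in\Exe_{\M}$, let $d$ be the symbol under the head in $\delta$; clause~4 says there is \emph{at most} one rule $s\step{a[d/e]M}t$, and clause~3 (applied, say, to any input-state predecessor, or simply the fact that an execution state must be reachable and the machine is nonblocking by clause~5 from the input side) together with the implicit totality intended for execution states guarantees there is \emph{at least} one; here I need to be slightly careful, because Definition~\ref{def:rtm-omega} literally only says ``at most one'' for execution states. I would argue that to obtain a genuine $\omega$-translation one must also assume (or it must be part of the intended reading of clause~4, matching the ``exactly one outgoing transition'' phrasing in the lemma) that every reachable execution-state configuration has an outgoing transition; alternatively, the lemma can be read as quantifying over the reachable fragment and appealing to interactiveness. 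For Totality, given $(s,\delta)\in\In_{\M}$ with $d$ under the head, clause~5 gives exactly two rules $s\step{?0[d/e_0]M_0}t_0$ and $s\step{?1[d/e_1]M_1}t_1$, which produce exactly two transitions out of $(s,\delta)$, labelled $?0$ and $?1$; distinctness of the labels makes them distinct transitions.

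The main obstacle is the mismatch I flagged in the Unambiguity case: Definition~\ref{def:rtm-omega} as stated only bounds the out-degree of execution states from above, so establishing ``exactly one'' requires either strengthening the definition or restricting attention to the reachable part of $\T(\M)$ and invoking the interactiveness/full-activity assumptions to rule out deadlocked execution configurations. Everything else is a routine bookkeeping unfolding of the two definitions, with the only care needed being that when a rule moves the head off the end of the written tape, the new tape instance is $\tphdL{\delta_L}e\delta_R$ or $\delta_L e\tphdR{\delta_R}$ with a fresh $\check{\Box}$ — but this affects only the tape component $\delta'$, never the state component or the label, so it does not interfere with any of the three properties.
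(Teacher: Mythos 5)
Your proposal is correct and follows essentially the same route as the paper: partition the configurations of $\T(\M)$ according to whether the control state lies in $\In$ or in $\Exe$, and lift each clause of Definition~\ref{def:rtm-omega} from the machine's rules to the transitions of $\T(\M)$, noting that the action label and the control-state component are carried over unchanged. The issue you flag under Unambiguity is genuine: Definition~\ref{def:rtm-omega} only bounds the out-degree of execution states from above (``at most one'' rule per pair $(s,d)$ with $s\in\Exe$), whereas the lemma asserts ``exactly one'' outgoing transition, and the paper's own proof makes precisely the unjustified jump from the former to the latter; so your caveat points to a (minor) defect in the paper's statement/definition rather than in your argument, and it is repaired either by reading clause~4 as ``exactly one'' or by restricting the lemma to reachable configurations and excluding deadlocked execution states.
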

\infull{
\begin{proof}
 A state in $\Sta_{\M}$ is a configuration $(s,\delta)$ of $\M$, and we can make a partition of the set of all configurations according to the control states. If $s\in\In$, then $(s,\delta)\in\In_{\M}$; if $s\in\Exe$, then $(s,\delta)\in\Exe_{\M}$, where $\In$ and $\Exe$ are defined in Definition~\ref{def:rtm-omega}.
\begin{enumerate}
    \item (Alternation) By condition 1 in Definition~\ref{def:rtm-omega}, we have $\Sta=\In\cup\Exe$ and $\In\cap\Exe=\emptyset$, which infers $\Sta_{\M}=\In_{\M}\cup\Exe_{\M}$ and $\In_{\M}\cap\Exe_{\M}=\emptyset$; moreover, by condition 2, for a transition $s\step{a[d/e]M}t$, if $s\in\In$, then $a\in \{?0,?1\}$ and $t\in\Exe$; if $s\in\Exe$, then $a\in\{!0,!1,\tau\}$ and $t\in \In$, which infers that for every transition $s\step{a} s'$, if $s\in\In_{\M}$, then $a\in\{?0,?1\}$ and $s'\in\Exe_{\M}$; if $s\in\Exe_{\M}$, then $a\in\{!0,!1,\tau\}$ and $s'\in\In_{\M}$.
    \item (Unambiguity) By condition 3 in Definition~\ref{def:rtm-omega}, for all $(s,d)$ where $s\in\Exe$ and $d\in\Dbox$, there is at most one transition $s\step{o[d/e]M}t$, which infers that for every $s\in\Exe_{\M}$, there is exactly one outgoing transition $s\step{a} s'$ with $a\in\{!0,!1,\tau\}$.
    \item (Totality) By condition 4 in Definition~\ref{def:rtm-omega}, for all $(s,d)$ where $s\in\In$ and $d\in\Dbox$, there are exactly two transitions of the form $s\step{i[d/e]M}t$, with $?0$ and $?1$ as there labels, respectively, which infers that for every $s\in\In_{\M}$, there are two outgoing transitions labelled by $?0$ and $?1$, respectively.
\end{enumerate}
\end{proof}
}

We call a transition that satisfies the conditions of Lemma~\ref{lemma:i-o-lts} an \emph{i/o transition system}.
Moreover, by analogy to the interactiveness condition for ITMs, we impose an interactiveness condition on RTMs for $\omega$-translation.

\begin{definition}
An i/o transition system is interactive, if
  for every $s\in\Sta$ and $s\step{?i}s_0$ with $i\in\{0,1\}$, and for every sequence $s_0\step{}s_1\step{}\cdots$, there exists a natural number $i$, such that $s_i\step{!o}s_{i+1}$ with $o\in\{0,1\}$.

An RTM for $\omega$-translation is \emph{interactive} if the associated i/o transition system is.
\end{definition}
\delete{
\begin{enumerate}
    \item \emph{free environment} assumption, that is, for every $s\in\In$, it has has exactly three outgoing transitions, labelled by $?0,\,?1,\,\tau$, respectively; and
    \item \emph{full active environment} assumption, that is, for every $s\in\In$, there are exactly two outgoing transitions, labelled by $?0,\,?1$, respectively.
\end{enumerate}

 Naturally, we should not implicitly put any preassumption on the behaviour of the environment. To adapt to a nonrestrictive environment, the system should allow any behaviour in input states, which is reflected by the free environment assumption. However, we shall see that free environment sometimes does not lead to a valid $\omega$-translation over $\Sigma$. (It does lead to an omega-translation over $\Sigma_\lambda$, as we shall see later.) Therefore, we also consider the so called full active environment assumption, which is also mentioned in~\cite{vLW2006}. In this case, the environment sends a non-empty input to the system for every input state.

 For an arbitrary input stream, the transition system should deterministically get an output stream, which requires the determinism. Namely, for every $s\in\Sta,\,a\in \Atau$, if there are transitions $s\step{a} s_1$, and $s\step{a} s_2$, then $s_1=s_2$. Note that, by the above definition and assumption for the environment, we already get a deterministic transition system.}

\delete{
Therefore, we can make a detour from the transition systems associated with RTMs and ITMs, in order make a comparison between interactively computable $\omega$-translations and executable $\omega$-translations.
}
We define the $\omega$-translation realized by an RTM by defining the $\omega$-translation realized by the i/o transition system associated with it. Let $\T=(\Sta,\step{},\uparrow)$ be an i/o transition system, let $s\in\Sta$, and let $\sigma\in\A^{\omega}$, say $\sigma=a_0,a_1,\ldots$; we write $s\step{\sigma}$ if there exist
  $s_0,s_0', s_1,s_1',\ldots\in\Sta$
such that $s=s_0$, and $s_i\step{}^{*}s_i'\step{a_i}s_{i+1}$ for all $i\geq 0$. (By $\step{}^{*}$ we denote the reflexive-transitive closure of the relation $\step{\tau}$.)
If $\sigma\in \mathcal{A}^\omega$ and $s \step{\sigma}$, then $\sigma$ is a \emph{weak infinite trace}
from $s$. We denote by $\witr(s)$ the set of weak infinite traces from $s$.\delete{, i.e.,

\begin{equation*}
\witr(s)=\{\sigma\in\A^{\omega}\mid s\step{\sigma}\}
\enskip.
\end{equation*}
}
\begin{definition}\label{def:in-out-lts}
 Let $\T$ be an i/o transition system, and $s_0$ be the initial state. For $\sigma\in\witr(s_0)$,
 the \emph{input stream realised by $\sigma$} is the stream $\vec{x}\in \Sigma^{\omega}$ such that $\vec{x}=x_1x_2\ldots$, where $x_j=i$ if $?i$ is the $j$-th input action in $\sigma$, and similarly for the \emph{output stream realized by $\sigma$}.
 \delete{
 we define $(\vec{x},\vec{y})\in \{?0,?1\}^{\omega}\times \{!0,!1\}^{\omega}$ as the pair of input and output streams realized by $\sigma$ as follows.
\begin{enumerate}
    \item Its input stream is $\vec{x}=x_1x_2\ldots$, where $x_j=i$, if $?i$ is the $j$-th input action in $\sigma$, and
    \item its output stream is $\vec{y}=y_1y_2\ldots$, where $y_j=o$, if $!o$ is the $j$-th output action in $\sigma$.
\end{enumerate}
}%
We say that $\T$ realizes $\omega$-translation $\phi:\Sigma^{\omega}\rightarrow \Sigma^{\omega}$ iff, for every $\vec{x}\in \Sigma^{\omega}$, there exists a trace $\sigma\in\witr(s_0)$ with $\vec{x}$ as its input stream, and for every such trace, its output stream is $\vec{y}=\phi(\vec{x})$.
\delete{
Moreover, if $\T$ is an executable i/o transition system, we say that it realizes an executable $\omega$-translation.
}
\end{definition}

We can now define when an $\omega$-translation is executable.

\begin{definition}\label{def:executable-omega}
An $\omega$-translation is executable if it can be realized by an executable i/o transition system.
\end{definition}

The following lemma establishes that an $\omega$-translation can be associated with every interactive i/o transition system.
\begin{lemma}~\label{lemma:io-LTS-omega}
    If an i/o transition system is interactive, then it realises an $\omega$-translation.
\end{lemma}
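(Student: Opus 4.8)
The plan is to show that an interactive i/o transition system $\T = (\Sta, \step{}, \uparrow)$ determines a well-defined total function $\phi : \Sigma^\omega \to \Sigma^\omega$ by verifying the two clauses of Definition~\ref{def:in-out-lts}: for every input stream $\vec{x} \in \Sigma^\omega$ there is a weak infinite trace $\sigma \in \witr(\uparrow)$ whose input stream is $\vec{x}$ (\emph{existence}), and every such trace yields the same output stream (\emph{uniqueness}). Granting both, we simply \emph{define} $\phi(\vec{x})$ to be the common output stream, and this is the required $\omega$-translation.

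For existence, I would build the trace incrementally, reading off the symbols of $\vec{x} = x_1 x_2 \cdots$ one at a time. Starting from $\uparrow \in \In_{\M}$ (Alternation), Totality gives an outgoing transition labelled $?x_1$ into some execution state. From an execution state, Unambiguity gives a unique outgoing transition; if it is labelled $\tau$ we follow it and repeat, if it is labelled $!o$ we have completed one input/output round and landed back in an input state. The key point here is exactly the \emph{interactiveness} hypothesis: it guarantees that any maximal sequence of internal $\tau$-steps issuing from the execution state reached after a $?i$-transition is finite and terminates in an $!o$-transition — so we never get stuck in an infinite $\tau$-divergence before emitting the next output. (One should also note that an execution state always has its unique outgoing transition defined, since the RTM is built so that for each $(s,d) \in \Exe \times \Dbox$ the transition is present; if one wants to be careful about the "at most one" in Definition~\ref{def:rtm-omega}(4) versus "exactly one", the interactiveness condition rules out the deadlocked case, as a deadlocked execution state would violate it for the empty $\tau$-sequence.) Iterating, we produce an infinite trace $\sigma$ consuming all of $\vec{x}$, so $\vec{x} \in \witr(\uparrow)$ has a witnessing trace.

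For uniqueness, I would argue that the trace just constructed is essentially forced. Because the system is deterministic in the relevant sense — from an input state the two outgoing transitions carry distinct labels $?0, ?1$ (Totality), and from an execution state there is a unique outgoing transition (Unambiguity) — the state reached after consuming each prefix $x_1 \cdots x_k$ and performing the intervening forced $\tau^* \cdot {!o}$ segments is uniquely determined, and hence so is each output symbol $o_k$. More precisely, I would prove by induction on $k$ that any two traces in $\witr(\uparrow)$ with input stream $\vec{x}$ pass through the same configuration after their $k$-th input action and produce the same first $k$ output symbols; the inductive step uses Unambiguity to pin down the $\tau^*\,{!o_k}$ segment and Totality together with $x_{k+1}$ to pin down the next input transition. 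Therefore the output stream is a function of $\vec{x}$ alone, completing the verification that $\T$ realises $\phi$.

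The main obstacle is the interactiveness condition doing its job cleanly in the existence argument: one must be sure that after every $?i$-transition the forced chain of $\tau$-steps cannot run forever, and also cannot terminate in a deadlock, so that an $!o$ is always eventually emitted and the construction genuinely produces an \emph{infinite} trace consuming \emph{all} of $\vec{x}$. This is precisely what the definition of interactive i/o transition system supplies, applied to the sequence starting at $s_0$ with $s \step{?i} s_0$. The rest — determinism giving uniqueness — is routine bookkeeping with the Alternation, Unambiguity, and Totality properties from Lemma~\ref{lemma:i-o-lts}.
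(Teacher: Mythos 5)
Your proposal is correct and follows essentially the same route as the paper's proof: Totality yields a trace for every input stream, interactiveness guarantees that the output stream is infinite (no $\tau$-divergence or deadlock between outputs), and Unambiguity forces uniqueness of the output stream. Your version is merely a more careful, step-by-step elaboration of the same argument, including the sensible observation that interactiveness also rules out deadlocked execution states.
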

\infull{
\begin{proof}
   Let $\T$ be an i/o interactive transition system, and let $s_0$ be the initial state of $\T$. By Definition~\ref{def:in-out-lts}, we need to show that there exists an $\omega$-translation $\phi$ such that for every $\vec{x}\in\Sigma^{\omega}$, there exists a trace $\sigma\in\witr(s_0)$ with input stream $\vec{x}$, and for every trace with input stream $\vec{x}$, its output stream is $\vec{y}=\phi(\vec{x})$.

   By the alternation condition in Lemma~\ref{lemma:i-o-lts}, every $\sigma\in\witr(s_0)$ is of the form $i_0 o_0 i_1 o_1\ldots$ where $i_j\in\{?0,?1\}$ and $o_j\in\{!0,!1,\tau\}$.
   Let $\vec{x}$ be an arbitrary input stream, by the totality condition in Lemma~\ref{lemma:i-o-lts}, we can find a trace $\sigma\in\witr(s_0)$ with input stream $\vec{x}$.

    Moreover, given an trace $\sigma$ with an infinite input stream $\vec{x}$, by interactiveness, it would always produce an infinite output stream $\vec{y}$.

    Finally, by unambiguity, there do not exist two traces sharing the same input stream. It follows that for every trace with input stream $\vec{x}$, its output stream is $\vec{y}$. Hence, we relate with every input stream a unique output stream, in a way, we get a $\omega$-translation from $\T$.
\end{proof}
}

\infull{
It is not hard to show the following lemmas,

\begin{lemma}\label{lemma:bis-translation}
Let $\T_1$ and $\T_2$ be two i/o transition systems, and $\T_1\bbisim \T_2$. Then they realize the same $\omega$-translation.
\end{lemma}

\begin{proof}
We let $s_1$ and $s_2$ be the initial states of $\T_1$ and $\T_2$, respectively. As $\T_1\bbisim \T_2$, we have that for every $\sigma\in\witr(s_1)$, there exists a trace $\sigma'\in\witr(s_2)$, and they share the same input and output stream, and vice versa. It follows that $\T_1$ and $\T_2$ realize the same $\omega$-translation.
\end{proof}

\begin{lemma}~\label{lemma:finite-input}
Let $\T$ be an interactive i/o transition system, and let $s_0$ be its initial state, then the following function is computable:
$g: \Sigma^{*}\rightarrow \Sigma^{*}$, satisfying that if $g(x)=y$, then for every $\sigma\in\witr(s_0)$ with input and output stream $\vec{x}$ and $\vec{y}$, if $x\prec\vec{x}$, then $y\prec\vec{y}$.
\end{lemma}

\begin{proof}
We consider a finite trace from $s_0$, we can associate with such a trace its input and output sequences in a similar way as defined in Definition~\ref{def:in-out-lts}.
By Lemma~\ref{lemma:i-o-lts}, there is only one finite trace with $x$ as its input sequence, and its output sequence is $y$. By totality, it holds for every $x\in\Sigma^{*}$.
As the transition relation of i/o transition systems are computable, $g$ is also computable.
\end{proof}
}

Moreover, we have the following theorem. \notinfull{(A proof can be found in the appendix.)}
\begin{theorem}~\label{thm:lc-exe-translation2}
 An $\omega$-translation is an executable iff it is a limit-continuous total function.
\end{theorem}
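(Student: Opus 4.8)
The plan is to prove the two implications separately, invoking Theorem~\ref{thm:int-trans} only for the direction from limit‑continuity to executability, and handling the other direction directly through the lemmas on i/o transition systems. Recall first that, by Definition~\ref{def:in-out-lts}, an executable $\omega$-translation is automatically total, since realisation requires a trace with input stream $\vec{x}$ to exist and to have output stream $\phi(\vec{x})$ for \emph{every} $\vec{x}\in\Sigma^{\omega}$.

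\emph{Executable $\Rightarrow$ limit-continuous.} Suppose $\phi$ is realised by an executable i/o transition system $\T=\T(\M)$, where $\M$ is an RTM for $\omega$-translation. I would first observe that the reachable part of $\T$ must be interactive: because every execution state has a unique outgoing transition (Unambiguity in Lemma~\ref{lemma:i-o-lts}), the behaviour of $\M$ between two consecutive input states is deterministic, so if some reachable configuration could only be followed by an infinite $\tau$-path then no trace would ever read input beyond that point, contradicting that $\T$ realises the total function $\phi$. Having established interactiveness, Lemma~\ref{lemma:finite-input} provides a computable $g:\Sigma^{*}\rightarrow\Sigma^{*}$ returning the output emitted along the unique finite trace that reads exactly $x$. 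Monotonicity of $g$ follows from the same determinism (the finite trace reading $x'$ extends the one reading $x$ whenever $x\prec x'$), and for a strictly increasing chain $\vec{x}_1\prec\vec{x}_2\prec\cdots$ with limit $\vec{x}$, interactiveness forces $|g(\vec{x}_k)|\to\infty$, so the increasing prefixes $g(\vec{x}_k)$ of $\phi(\vec{x})$ satisfy $\lim_k g(\vec{x}_k)=\phi(\vec{x})$. Hence $\phi$ is limit-continuous (and total).

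\emph{Limit-continuous total $\Rightarrow$ executable.} By Theorem~\ref{thm:int-trans} there is a deterministic ITM $\I=(Q,\step{}_{\I},\qin)$ satisfying the interactiveness condition that produces $\phi(\vec{x})$ on every input $\vec{x}$. I would turn $\I$ into an RTM for $\omega$-translation $\M$ by splitting each ITM step into an input step followed by an output step, as in the construction in the proof of Theorem~\ref{thm:itm-lts-iso} but restricted to genuine input symbols: states $\In=Q$, $\Exe=\{s_o\mid s\in Q,\ o\in\Sigma_\lambda\}$, initial state $\qin$, transitions $s\step{?i[d/e]M}t_o$ whenever $(s,d,i)\step{}_{\I}(t,e,M,o)$ with $i\in\Sigma$, and $t_o\step{!o[d/d]S}t$, respectively $t_\lambda\step{\tau[d/d]S}t$. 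The stay transitions are harmless by Lemma~\ref{lemma:RTMN}, and since that simulation respects the i/o structure, Lemma~\ref{lemma:bis-translation} ensures the realised translation is unchanged. Determinism of $\I$ makes $\M$ satisfy the five clauses of Definition~\ref{def:rtm-omega}. The traces from the initial configuration of $\T(\M)$ are in bijection with the input-active computations of $\I$, preserving both input and output streams; since every input step of $\M$ carries a non-$\lambda$ symbol, the interactiveness condition on $\I$ guarantees it is followed within finitely many steps by an output step, so the reachable part of $\T(\M)$ is an interactive i/o transition system (Lemma~\ref{lemma:io-LTS-omega}). By the bijection with the computations of $\I$, the translation it realises is exactly $\phi$; hence $\phi$ is executable.

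The step I expect to be the main obstacle is in the first implication: arguing that an executable i/o transition system realising a \emph{total} translation is already interactive (so that Lemma~\ref{lemma:finite-input} applies) and then showing that the resulting monotonic approximant $g$ really converges to $\phi$ on every chain — this convergence is precisely where the finite-delay character of interactiveness is indispensable. In the second implication the only minor care needed is in the treatment of stay transitions, namely checking that the simulation of Lemma~\ref{lemma:RTMN} preserves both the i/o structure of Lemma~\ref{lemma:i-o-lts} and, via Lemma~\ref{lemma:bis-translation}, the realised translation.
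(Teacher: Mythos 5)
Your proposal is correct and follows essentially the same route as the paper: the executable-to-limit-continuous direction via the computable monotonic approximant of Lemma~\ref{lemma:finite-input}, and the converse via Theorem~\ref{thm:int-trans} together with the ITM-to-RTM construction underlying Theorem~\ref{thm:itm-lts-iso}, Lemma~\ref{lemma:RTMN} and Lemma~\ref{lemma:bis-translation}. You are in fact slightly more careful than the paper in two places (deriving interactiveness of the reachable part from totality, and dropping the $\lambda$-input transitions so that the resulting system is genuinely i/o); just note that by alternation an i/o transition system has no infinite $\tau$-paths at all, so the failure mode to exclude is an infinite trace with only finitely many genuine output actions, which totality of the realised translation rules out exactly as you intend.
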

\infull{
\begin{proof}
We let $\phi$ be an $\omega$-translation.
\begin{enumerate}
\item For the ``only if'' part, we need to show that there exists a computable total function $g:\Sigma^{*}\rightarrow\Sigma^{*}$, such that $g$ is monotonic and for all strictly increasing chains $u_1\prec u_2\prec\ldots\prec u_t\prec\ldots$ with $u_t\in\Sigma^{*}$ ($t\geq 1$), one has $\phi(\lim_{t\rightarrow\infty}u_t)=\lim_{t\rightarrow\infty}g(u_t)$.

    We assume that $\phi$ is realized by an interactive i/o transition system $\T$, and we let $s_0$ be the initial state of $\T$. By Lemma~\ref{lemma:finite-input} the following function is computable: $g: \Sigma^{*}\rightarrow \Sigma^{*}$, satisfying that if $g(x)=y$, then for every $\sigma\in\witr(s_0)$ with input and output stream $\vec{x}$ and $\vec{y}$, if $x\prec\vec{x}$, then $y\prec\vec{y}$. By unambiguity and totality, $g$ is a monotonic and total computable function.

 Moreover, for a strictly increasing chain $u_1\prec u_2\prec\ldots\prec u_t\prec\ldots$ with $u_t\in\Sigma^{*}$ for $t\geq 1$, the computation of $\lim_{t\rightarrow\infty}g(u_t)$ is the execution of a trace $\sigma$ receiving the input stream $\lim_{t\rightarrow\infty}u_t$. Hence we have $\phi(\lim_{t\rightarrow\infty}u_t)=\lim_{t\rightarrow\infty}g(u_t)$.

    Thus, $g$ is the computable total function we need, and it follows that $\phi$ is a computable limit-continuous total function.

\item

    For the ``if'' part, we assume that $\phi$ is a total limit-continuous function, and design an RTM $\M$ to realize this translation.
    By Theorem~\ref{thm:int-trans}, $\phi$ is interactively computable by some ITM $\M'$. According to Definition~\ref{def:lts-itm} and Lemma~\ref{lemma:i-o-lts}, the transition system associated with $\M'$ is an i/o transition system, moreover, according to Corollary~\ref{cor:itm-lts-exe}, it is an executable i/o transition system. Therefore, we have shown that $\phi$ is an executable $\omega$-translation by Lemma~\ref{lemma:bis-translation}.
\end{enumerate}
\end{proof}
}

By Theorem~\ref{thm:int-trans}, we have the following corollary.
\begin{corollary}~\label{cor:equiv}
An $\omega$-translation is executable iff it is interactively computable.
\end{corollary}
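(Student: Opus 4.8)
The plan is to obtain Corollary~\ref{cor:equiv} simply by chaining the two characterisation results already established, namely Theorem~\ref{thm:lc-exe-translation2} (executability in terms of limit-continuous total functions) and Theorem~\ref{thm:int-trans} (interactive computability in terms of limit-continuity). So the first thing I would do is fix an arbitrary $\omega$-translation $\phi:\Sigma^{\omega}\rightarrow\Sigma^{\omega}$ and argue the two implications in turn via the common intermediate notion ``limit-continuous total function''.

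For the direction ``executable $\Rightarrow$ interactively computable'': assume $\phi$ is executable. By Theorem~\ref{thm:lc-exe-translation2}, $\phi$ is then a limit-continuous \emph{total} function. Since it is total, Theorem~\ref{thm:int-trans} applies directly and yields that $\phi$ is interactively computable. For the converse, assume $\phi$ is interactively computable. Here I would first record that, by definition, an interactively computable $\omega$-translation is total (it is produced by a deterministic ITM satisfying the interactiveness condition, whose interactive behaviour is total, and the input-active computations then associate a well-defined output stream with every input stream). Hence Theorem~\ref{thm:int-trans} again applies and shows $\phi$ is limit-continuous; being limit-continuous and total, Theorem~\ref{thm:lc-exe-translation2} gives that $\phi$ is executable.

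The only point requiring a moment's care — and the closest thing to an ``obstacle'' — is the matching of the totality hypotheses: Theorem~\ref{thm:int-trans} is phrased for total $\omega$-translations, so one must check that this is not a genuine restriction in either direction. In the executable-to-computable direction totality is handed to us by Theorem~\ref{thm:lc-exe-translation2}; in the computable-to-executable direction it follows from the definition of interactive computability as noted above. Once this is observed, the corollary is immediate, and no further construction or computation is needed.
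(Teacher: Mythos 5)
Your proof is correct and follows essentially the same route as the paper, which obtains the corollary by composing Theorem~\ref{thm:lc-exe-translation2} with Theorem~\ref{thm:int-trans} through the common characterisation as limit-continuous total functions. Your extra remark on matching the totality hypotheses is a sensible (and harmless) elaboration of a point the paper leaves implicit.
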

Therefore, the classes of computable limit-continuous functions, interactively computable $\omega$-translations and executable $\omega$-translations coincide.
\delete{
Now we turn our attention to free environment assumption, where the environment is allowed to send a datum or do nothing in an input state. This assumption is natural since in many practical cases, the environment does not keep sending data to the system all the time, and the system needs to be able to deal with such environment with non-input moments.

One can observe that if we assume a free environment, and keep other conditions unchanged, the behaviour of input-output labelled transition systems is no longer to realize $\omega$-translations of the form $\phi:\Sigma^{\omega}\rightarrow\Sigma^{\omega}$.

\input{example}

\begin{example}~\label{exp:phi-lambda}
    Consider the following input-output transition system $T=(\Sta,\step{},\uparrow)$ with free environment assumption, where
    \begin{enumerate}
        \item $\Sta=\{s_0,s_1,s_0',s_1'\}$, where $s_0,s_1\in\In$ and $s_0,s_1'\in\Exe$ are input and output states, respectively;
        \item $s_0\step{?1} s_1'$, $s_0\step{?0} s_0'$, $s_0\step{\tau}s_0'$,

              $s_1\step{?1} s_0'$, $s_1\step{?0} s_1'$, $s_1\step{\tau}s_1'$,

              $s_0'\step{!0} s_0$, $s_1'\step{!1} s_1$; and
        \item $\uparrow=s_0$.
    \end{enumerate}
    The function of this system is to judge the number of ``$1$''s received from the input channel, if it is an even number, then the output channel sends a $?0$ to the environment, otherwise, it sends an $!1$.

    Now we explain that we cannot use Definition~\ref{def:in-out-lts} to get an $\omega$-translation $\phi:\Sigma^{\omega}\rightarrow\Sigma^{\omega}$.
    Consider the following stream of behaviour from the environment $?1,?\lambda,?1,?0,?0,\ldots$ and $?1,?1,?0,?0,\ldots$, both followed by infinitely many $?0$s.
    Then the output for the first stream should be $!1,!1,!0,!0,\ldots$, and the other one is $!1,!0,!0,!0,\ldots$.
    Then by Definition~\ref{def:in-out-lts}, both of them are associated with an input stream $\vec{x}=11000\ldots$, but their output streams are $\vec{y}_1=11000\ldots$ and $\vec{y}_2=10000\ldots$, which are different.
\end{example}

Hence, we need a more general notion of $\omega$-translation for such behaviour, that is, we extend $\phi$ to $\phi_\lambda: \Sigma_\lambda\rightarrow\Sigma_\lambda$.

\begin{definition}\label{def:in-out-lambda-lts}
 Let $T$ be a transition system with free environment and the restrictions in this section. Let $s_0$ be its initial state, and $\witr(s_0)$ be the set of weak infinite trace $s_0$. For $\sigma\in\witr(s_0)$, we define $(\vec{x},\vec{y})\in (0,1,\lambda)^{\omega}\times (0,1,\lambda)^{\omega}$ as the pair of input and output streams realized by $\sigma$ as follows. Suppose $a_j$ is the $j$-th label of $\sigma$.
 \begin{enumerate}
    \item Its input stream is $\vec{x}=x_1x_2\ldots$, where $x_j=i(a_{2j-1})$, and
    \item its output stream is $\vec{y}=y_1y_2\ldots$, where $y_j=o(a_{2j})$, where

     $\mathit{i}(a)=\left\{\begin{array}{ll}
     x& a=?x\\
     \lambda & a=\tau
     \end{array}\right.$,
     $\mathit{out}(o)=\left\{\begin{array}{ll}
        x& a=!x\\
     \lambda & a=\tau
     \end{array}\right.$.
\end{enumerate}

We say that $T$ realizes an $\omega$-translation $\phi_\lambda:\Sigma_\lambda^{\omega}\rightarrow \Sigma_\lambda^{\omega}$, iff for every $\vec{x}\in \Sigma_\lambda^{\omega}$, there exists a trace $\sigma\in\witr(s_0)$ receiving $\vec{x}$, and for every such trace, it produce the output stream $\vec{y}=\phi(\vec{x})$.

We call an $\omega$-translation $\phi_\lambda$ executable if it is realized by an executable transition system.
\end{definition}

By analogy to Theorem~\ref{thm:lc-exe-translation2}, we have the following statement.
\begin{theorem}~\label{thm:lc-exe-translation3}
For free environment, the following statement is valid.

    Let $\phi_{\lambda}$ be a function $\phi_\lambda^{\omega}\rightarrow\phi_\lambda^{\omega}$. Then $\phi_{\lambda}$ is an executable translation iff $\phi_{\lambda}$ is a limit-continuous total function.

\end{theorem}}

\section{Advice}\label{sec:advice}

In~\cite{vLW2001}, the computational power of evolving interactive systems is studied using ITMs. Particularly, a mechanism called \emph{advice function} is introduced to enhance the computational power of an ITM. In this way, the insertion of external information into the course of a computation is allowed, which leads to a non-uniform operation. In this section, we introduce the notion of advice as a process in parallel composition with an RTM, and show that advice processes indeed give the systems more expressive power. 
\delete{
\subsection{Advice Process}}

In this section, we consider advices as functions over natural numbers. In order to record a number on the tape, a natural number $n$ is encoded by a sequence $n$ ``$1$''s ending with a ``$0$''. In~\cite{vLW2001}, the notion of ITM with advice is defined as follows.

\begin{definition}
An \emph{advice} function is a function $f: \mathbb{N}\rightarrow \mathbb{N}$. An ITM with advice (ITM/A) is equipped with a separate \emph{advice tape} and a distinguished \emph{advice state}. By writing the value of the argument $x$ on the advice tape and by entering into the advice state, the value of $f(x)$ will appear on the advice tape in a single step. By this action, the original contents of the advice tape is completely overwritten.
\end{definition}

Here we do not put the restriction on the length of the advice function as in~\cite{vLW2006}, since it does not make a difference in the issue of computability, and we are not yet interested in the issue of complexity. It is obvious that ITMs with uncomputable advice functions cannot be simulated by any RTM, as uncomputable advice function cannot be evaluated by the mechanism of RTMs. As an extension, we equip RTMs with advice processes which enable the simulation of ITM/As.

An advice process $A_f$ is designed to compute the function $f$, and can only interact with a certain RTM $\M$. As an advice function is not necessarily computable, we cannot associate with every advice process an executable transition system. An RTM $\M$ communicates with $A_f$ as follows: when it needs to get the result of $f(i)$, it enters a special control state $a_f$, and starts to send  a sequence of $i$ ``1'' s and a ``0'' , which is already written on the tape, to the channel $\overline{\mathit{in}}$, and then, it receives the result sequence $f(i)$ ``1''s and a ``$0$'' from $\mathit{out}$ channel, and write them on the tape. This procedure ends up with another control state. We can model an advice process as follows.

\begin{definition}\label{def:advice}
Let $f: \mathbb{N}\rightarrow \mathbb{N}$ be a function, $A_f$ is an advice process for $f$ with transition system $\T(A_f)=(\Sta,\rightarrow,\uparrow)$, where
\begin{enumerate}
    \item $\Sta=\{s_i\mid i=0,1,2,\ldots\}\cup\{t_i\mid i=0,1,2,\ldots\}$, and
    \item $s_i\step {\mathit{in}? 1} s_{i+1},\, i=0,1,2\ldots\quad s_i\step{\mathit{in}? 0}t_{f(i)},\, i=1,2\ldots$\\
        $t_i\step{\mathit{out} !1} t_{i-1},\, i=1,2\ldots\quad t_0\step{\mathit{out}! 0} s_0$
    \item $\uparrow=s_0$.
\end{enumerate}
\end{definition}

The behaviour of $A_f$ is deterministic. It receives a sequence of $i$ ``$1$''s from the channel $\mathit{in}$, followed by a ``$0$'' symbol, indicating the end of the sequence, and then, it produces $f(i)$ ``$1$''s to the channel $\mathit{out}$, also followed by a ``$0$'' symbol. This procedure is repeated indefinitely.

 The parallel composition of an RTM $\M$ and an advice process $A_f$, we write as $[\M\parallel A_{f}]_{\C}$. The parallel composition is defined in the same way as the parallel composition of two RTMs in~\cite{BLT2013}, where $\C=\{\mathit{in},\mathit{out}\}$ is the set of restricted names for communication. If $\M$ is an RTM and $A_f$  is an advice process, then we call $[\M\parallel A_f] _{\C}$ a reactive Turing machine with advice (RTM/A).

Note that, since advice functions and advice processes have the same computational power, by Corollary~\ref{cor:equiv}, an $\omega$-translation is realisable by an ITM/A if, and
only if, it is realisable by an RTM/A.
\delete{
\subsection{Executability with Advice}}

    \delete{In the theory of executability, it is substantial to figure out the expressive power of the labelled transition systems associated with RTM/As.
    We now proceed to show that every boundedly branching labelled transition system can be simulated by some RTM/A up to divergence-preserving branching bisimilarity, providing that the advice is not restricted to evaluate computable functions.}

Let $\T$ be any bounded branching transition system (not necessarily effective). Based on a presupposed
encoding of its sets of states and actions and its transition relation, let the advice function $f_{\T}$ be such
that for the code of a state it yields the code of the set of all outgoing transitions of that state. It is
straightforward to define an RTM that simulates $\T$ with the help of $f_{\T}$. Then we obtain the following result.

\begin{theorem}~\label{thm:bound-lts-rtma}
If $\T$ is a boundedly branching labelled transition system, then there exists an RTM/A $[\M\parallel A_{f}]_{\C}$ such that $\T([\M\parallel A_{f}]_{\C})\bbisimd \T$.
\end{theorem}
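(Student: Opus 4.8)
The plan is to construct an RTM $\M$ and an advice function $f_{\T}$ explicitly, and then exhibit a divergence-preserving branching bisimulation between $\T([\M\parallel A_f]_{\C})$ and $\T$. First I would fix a presupposed effective encoding: since $\T=(\Sta,\step{},\uparrow)$ is boundedly branching, say with branching bound $B$, every state $s$ has at most $B$ outgoing transitions, so the set $\{(a,t)\mid s\step{a}t\}$ is finite and can be coded as a single natural number $\encode{s}$'s ``successor list''. Enumerate $\Sta$ (injectively) by natural numbers and enumerate $\A$ by natural numbers; then define $f_{\T}(\encode{s})$ to be the code of the finite sequence of pairs $(\encode{a},\encode{t})$ ranging over all transitions $s\step{a}t$ (and $f_{\T}(n)=0$, say, for $n$ not coding a reachable state). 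Note $f_{\T}$ is a genuine function $\mathbb{N}\to\mathbb{N}$, which is all Definition~\ref{def:advice} requires; it need not be computable, which is exactly the point of the advice mechanism.

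Next I would describe the RTM $\M$ that drives the simulation. Its tape maintains the code $\encode{s}$ of the ``current'' state of $\T$, initialised to $\encode{\uparrow}$. One simulation round of $\M$ proceeds as follows: it copies $\encode{s}$ into the form required to query the advice (a block of $1$s terminated by $0$), enters the advice state $a_f$ so that after the interaction on the restricted channels $\mathit{in},\mathit{out}$ the tape holds $f_{\T}(\encode{s})$, i.e. the encoded successor list of $s$; then it decodes this list, nondeterministically selects one entry $(\encode{a},\encode{t})$ from it (using a finite number of control states — possible because there are at most $B$ entries and $B$ is a fixed constant), performs the action $a$ (reading $\encode{a}$ off the tape and branching into one of the finitely many action-executing control states, each labelled by the corresponding $a\in\Atau$), overwrites the tape with $\encode{t}$, and loops. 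All the tape bookkeeping — copying, decoding a bounded-length list, selecting an entry, rewriting — is ordinary Turing-machine manipulation and is carried out by internal ($\tau$) steps, so each simulated transition $s\step{a}t$ of $\T$ corresponds in $\T([\M\parallel A_f]_{\C})$ to a finite block of $\tau$-steps followed by a single $a$-labelled step landing in a configuration encoding $t$.

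Then I would define the relation $\R$ relating a state $s\in\Sta$ to every configuration of $[\M\parallel A_f]_{\C}$ that arises ``in the middle'' of simulating the outgoing behaviour of $s$ — that is, the configuration with tape coding $\encode{s}$ and $\M$ in its start-of-round control state, together with all the intermediate configurations reached from it by the preparatory $\tau$-steps up to (but not including) the one where the visible action is performed. Checking that $\R$ is a branching bisimulation is routine: a transition $s\step{a}t$ in $\T$ is matched by the $\tau$-block-then-$a$ path in $\M$ (here one uses that $f_{\T}(\encode{s})$ really does list exactly the transitions of $s$, and that $\M$'s nondeterministic choice ranges over exactly those entries), and conversely every visible step out of a configuration related to $s$ is an $a$-step to a configuration coding some $t$ with $s\step{a}t$, with the preceding $\tau$-steps staying inside the $\R$-class of $s$. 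For divergence-preservation one must check there is no infinite $\tau$-path within a single $\R$-class: this holds because each simulation round — advice query plus bounded decoding and selection — terminates in finitely many steps, including the advice sub-protocol of Definition~\ref{def:advice}, which halts after $i+f_{\T}(i)+2$ steps. Hence $\T([\M\parallel A_f]_{\C})\bbisimd\T$.

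The main obstacle is not conceptual but a matter of being careful about two things. First, the advice channel protocol must be correctly interfaced: $\M$ must present the argument on the tape in precisely the unary-with-terminator format that $A_f$ consumes, drive the communication on the restricted names $\mathit{in},\mathit{out}$ (reusing the parallel-composition and restriction operators from~\cite{BLT2013}), and then find the result in the expected format — and crucially this whole exchange must be free of divergence, which it is since $A_f$ is deterministic and terminating on each query. Second, one must genuinely use boundedness: because the branching degree is bounded by a fixed $B$, the ``select one successor from the decoded list'' step can be implemented with a fixed finite set of control states and a fixed finite set of action-performing states (one per relevant action, of which there are finitely many if $\A$ is finite — and if $\A$ is only countable one instead reads the action code off the tape and the RTM's finite alphabet still suffices since $\Atau$ as used by RTMs is finite, so one restricts attention to the finitely many action symbols actually occurring, which again is legitimate by bounded branching together with the standing assumption that RTMs have a finite label set). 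Getting this finiteness accounting right, so that $\M$ is a legal RTM with finitely many states, is the one place the argument could go wrong if done carelessly.
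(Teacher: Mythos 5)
Your overall architecture --- an advice function returning the encoded list of outgoing transitions of the current state, an RTM that queries it, decodes the answer, performs one listed action, updates the tape to the successor and loops, with finiteness of the control secured by the branching bound --- is exactly the paper's. But there is one genuine flaw in the construction as you describe it: you resolve the nondeterministic choice among the $m$ outgoing transitions of $s$ by internal steps \emph{before} the visible action is emitted (you explicitly list ``selecting an entry'' among the operations ``carried out by internal ($\tau$) steps'', and only afterwards read $\encode{a}$ off the tape and fire the $a$-labelled transition). This breaks branching bisimilarity, not just its divergence-preserving refinement: once the machine has $\tau$-committed to the entry $(a,t)$, the resulting configuration can no longer match a transition $s\step{b}u$ of $\T$ for a different entry, so it cannot be related to $s$; but it also cannot be left unrelated to $s$, since then the committing $\tau$-step out of a configuration related to $s$ would have to be matched by a move of $\T$ to a state related to the committed configuration, which is impossible. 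This is the standard inequivalence of $b+c$ and $\tau.b+\tau.c$. Your relation $\R$, which relates $s$ to \emph{all} intermediate configurations up to the visible action, contains exactly these bad pairs, so the ``routine'' check does not go through.

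The paper's proof avoids this by making the choice and the observable action coincide: the decoding phase loads the entire action menu into the control state (states $\mathit{next}_{\{a_1,\ldots,a_m\}}$, indexed by words over $\Atau$ of length at most the branching bound $k$), and from that single configuration the machine offers all $m$ transitions simultaneously, each labelled with its $a_i$ and leading to $\mathit{choose}_i$; only \emph{after} the visible action is the successor state projected out by $\tau$-steps. This is also where bounded branching is genuinely needed, namely to keep the family of $\mathit{next}$-states finite. The rest of your argument (the advice interface and its unary protocol, termination and hence divergence-freeness of each round, finiteness of $\M$) is fine; the one step that fails is the ``select by $\tau$, then act'' ordering, which must be replaced by ``offer all listed actions from one state and commit only by performing one of them''.
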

\infull{
\begin{proof}
We assume that $\T=(\Sta_{\T},\step{}_{\T},\uparrow_{\T})$ is an $\Atau$-labelled transition system. It has $n$ distinct action labels and its branching degree is bounded by $k$.
Then we encode $\Atau$ and $\Sta_{\T}$ as natural numbers. Let $\encode{a}$ and $\encode{s}$ be the encodings of an action and a state, and $\encode{x_1,x_2,\ldots,x_n}$ be the encoding of an $n$-tuple.

The advice process $A_{f}$ realizes the following function:
\begin{equation*}
f(\encode{s})=\encode{a_1,\ldots, a_m, s_1,\ldots,s_m}
\enskip,
\end{equation*}
where $(a_i,s_i)\in \{(a_1,s_1),\ldots,(a_m,s_m)\}$ iff $s\step{a_i}_{\T} s_i$.

An outline of the execution of $\M$ is defined as follows.
\begin{enumerate}
    \item We need the following control states: $\mathit{initial}$, $\mathit{advice}$, $\mathit{decode}$, $\mathit{next}_{\Atau^{\leq k}}$ ($\Atau^{\leq k}$ ranges over all $\Atau$ words with at most length $k$), $\mathit{choose}_i$ ($i\leq k$).
    \item The execution of $\M$ is as follows, its initial configuration is $(\mathit{initial},\Box)$.
    \begin{enumerate}
        \item In $\mathit{initial}$ state, the machine writes the encoding of initial state of the transition system $\encode{\uparrow_{\T}}$ on the tape, and reaches $\mathit{advice}$ state.
        \begin{equation*}
         (\mathit{initial}, \Box)\step{}^{*} (\mathit{advice}, \encode{\uparrow_{\T}})
         \enskip.
        \end{equation*}
        \item In $\mathit{advice}$ state, the machine sends the encoding of the current state $\encode{s_0}$ to the advice process, and gets the encoding of list of all possible transitions $\encode{a_1,\ldots, a_m, s_1,\ldots,s_m}$ from the advice process.
        \begin{equation*}
         (\mathit{advice},\encode{s_0})\step{}^{*} (\mathit{decode},\encode{a_1,\ldots, a_m, s_1,\ldots,s_m})
         \enskip.
        \end{equation*}
        \item In $\mathit{decode}$ state, the machine decodes all the actions from the tape, and enters one of the $\mathit{next}$ state.
        \begin{equation*}
        (\mathit{decode},\encode{a_1,\ldots, a_m, s_1,\ldots,s_m})\step{}^{*} (\mathit{next}_{\{a_1,\ldots,a_m\}},\encode{s_1,\ldots,s_m})
        \enskip.
        \end{equation*}
        \item In $\mathit{next}_{\{a_1,\ldots,a_m\}}$ state, the machine chooses one of the actions. For every $i=1,\ldots,m$, there is a transition
        \begin{equation*}
        (\mathit{next}_{\{a_1,\ldots,a_m\}},\encode{s_1,\ldots,s_m})\step{a_i} (\mathit{choose}_i,\encode{s_1,\ldots,s_m})
        \enskip.
        \end{equation*}
        \item In $\mathit{choose}_i$ state, the machine projects the encoding $\encode{s_1,\ldots,s_m}$ to the encoding of the $i$-th state, and enters $\mathit{advice}$ state again.
        \begin{equation*}
        (\mathit{choose}_i,\encode{s_1,\ldots,s_m})\step{}^{*}(\mathit{advice},\encode{s_i})
        \enskip.
        \end{equation*}
    \end{enumerate}
\end{enumerate}
The above procedure describes the simulation of a step of transition $s_0\step{a_i}_{\T}s_i$ in $\T$. Note that the choice of the transition is happened only in the state $\mathit{next}_{\{a_1,\ldots,a_m\}}$. Moreover, no infinite $\tau$-transition sequence is introduced for simulation.
Hence, we are able to verify that $\T([\M\parallel A_{f}]_{\C})\bbisimd \T$.
\end{proof}
}

If we, instead, let the advice function $f_{\T}$ be such that on the code of a pair of a state $s$ and a natural
number $i$ yields the code of the $i$th outgoing transition of $s$, then we can extend the simulation to transition
systems with countable many states and transitions.

\begin{theorem}\label{thm:lts-rtma}
If $T$ is a countable labelled transition system, then there exists an RTM/A $[\M\parallel A_{f}]_{\C}$ such that $\T([\M\parallel A_{f}]_{\C})\bbisim T$.
\end{theorem}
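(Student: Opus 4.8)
The plan is to adapt the construction from the proof of Theorem~\ref{thm:bound-lts-rtma} by replacing the single-shot advice query (which returns the entire finite list of outgoing transitions) with a query mechanism that enumerates outgoing transitions one at a time, using a second natural-number argument as an index. Concretely, fix a presupposed encoding $\encode{\cdot}$ of states, of actions, and of pairs $\encode{s,i}$. Since $T$ is countable, for each state $s$ the set of outgoing transitions $\{(a,t)\mid s\step{a}_{T}t\}$ is countable, so fix (arbitrarily) an enumeration of it; let the advice function $f_{T}$ be defined so that $f_{T}(\encode{s,i})$ yields $\encode{\texttt{yes},a,t}$ when $(a,t)$ is the $i$th transition out of $s$ in that enumeration, and $\encode{\texttt{no}}$ when $s$ has fewer than $i$ outgoing transitions. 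This $f_{T}$ need not be computable, which is exactly why it is supplied as advice rather than programmed into $\M$.

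The RTM $\M$ simulates $T$ as follows. It keeps the code $\encode{s}$ of the current state of $T$ written on the tape (initially $\encode{\uparrow_{T}}$). From the configuration holding $\encode{s}$, it performs an internal loop: it maintains an index counter $i$ (starting at $0$), and repeatedly it forms $\encode{s,i}$, queries the advice process for $f_{T}(\encode{s,i})$, and inspects the answer. If the answer is $\encode{\texttt{no}}$, it increments $i$ and queries again; this is where divergence is introduced, since if $s$ has no outgoing transitions at all (or finitely many and the machine keeps guessing further) $\M$ may loop forever doing $\tau$-steps. If the answer is $\encode{\texttt{yes},a,t}$, then $\M$ nondeterministically either (i) performs the action $a$ and rewrites the tape to hold $\encode{t}$, returning to the start of the loop with $i$ reset to $0$, or (ii) discards this transition, increments $i$, and continues querying. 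Option (ii) is needed so that every outgoing transition of $s$, not just the first, can eventually be chosen. All the bookkeeping — writing $\encode{s,i}$, sending the unary code to the advice channel, receiving the reply, decoding $\texttt{yes}/\texttt{no}$, projecting out $a$ and $\encode{t}$, resetting or incrementing $i$ — is carried out with finitely many control states and $\tau$-transitions, exactly as in the proof of Theorem~\ref{thm:bound-lts-rtma}; the advice process $A_{f_{T}}$ of Definition~\ref{def:advice} is used (with $f=f_{T}$) to realise the queries, and the parallel composition $[\M\parallel A_{f_{T}}]_{\C}$ hides the communication over $\C=\{\mathit{in},\mathit{out}\}$.

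To finish, I would exhibit a branching bisimulation $\R$ between $\T([\M\parallel A_{f_{T}}]_{\C})$ and $T$. The relation $\R$ relates a configuration of $[\M\parallel A_{f_{T}}]_{\C}$ to the state $s\in\Sta_{T}$ whose code is currently on the work tape, for all the "intermediate" configurations reachable while $\M$ is in the query loop for $s$ (i.e.\ before it has committed to and performed a visible action); one checks that every such intermediate configuration can only do $\tau$-steps leading to another configuration still related to $s$, or an $a$-step (when it commits via option (i)) to a configuration related to some $t$ with $s\step{a}_{T}t$, and conversely that every transition $s\step{a}_{T}t$ of $T$ is matched: the simulating machine can $\tau$-reach a configuration that offers exactly that $a$-step, since the enumeration underlying $f_{T}$ lists $(a,t)$ at some finite index $i$ and both options (i) and (ii) are available at every positive answer. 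Because the bisimulation absorbs an unbounded amount of internal activity, only plain branching bisimilarity $\bbisim$ is claimed, not $\bbisimd$; indeed divergence is genuinely introduced at states of $T$ with no outgoing transitions, so $\bbisimd$ is in general unattainable by this construction, which is why the theorem statement weakens the equivalence.

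The main obstacle is organising the nondeterministic choice in the query loop so that \emph{exactly} the outgoing transitions of $s$ are offered — no more and no less — while the branching structure at the point of choice is preserved up to branching bisimilarity. One has to be careful that the "skip this transition and keep enumerating" branch (option (ii)) does not let $\M$ reach a configuration from which it commits to a transition of some \emph{other} state, and that the "commit" branch (option (i)) is enabled in a configuration that is itself $\R$-related to $s$, so that the branching-bisimulation matching conditions (which allow preceding $\tau$-steps only through $\R$-related states) go through; this is routine but is the step where the details genuinely matter.
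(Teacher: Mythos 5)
Your proposal is correct and follows essentially the same route as the paper: the advice function is indexed by a pair $\encode{s,i}$ returning the $i$-th outgoing transition of $s$ under a fixed enumeration, the machine nondeterministically cycles through indices via $\tau$-steps before committing to an action, and the resulting divergence is exactly why only $\bbisim$ rather than $\bbisimd$ is obtained. The only cosmetic difference is that you return an explicit $\texttt{yes}/\texttt{no}$ flag for out-of-range indices, whereas the paper lets the machine increment the counter before querying; both yield the same branching bisimulation argument.
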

\infull{
\begin{proof}
We assume that $\T=(\Sta_{\T},\step{}_{\T},\uparrow_{\T})$ is a countable $\Atau$-labelled transition system. It has $n$ distinct action labels and it possibly has infinitely branching.
Then we encode $\Atau$ and $\Sta_{\T}$ as natural numbers. Let $\encode{a}$ and $\encode{s}$ be the encodings of an action and a state, and $\encode{x_1,x_2,\ldots,x_n}$ be the encoding of an $n$-tuple.

The transition relation $\step{}_{\T}$ maps a state, namely, $s_0$, to a possibly infinite set $\{(a_i,s_i)\mid s_0\step{a_i}_{\T} s_i\}$, denoted by $s_0\step{}_{\T}$. We define an order $<_{\T}$ over the elements in the set $s_0\step{}_{\T}$ such that $(a,s)<_{\T}(a',s')$, if $\encode{a,s}<_{\T}\encode{a',s'}$.

The advice function $A_{f}$ realizes the following function:
\begin{equation*}
f(\encode{s_0,i})=\encode{a_i,s_i}
\enskip,
\end{equation*}
where $(a_i,s_i)$ is the $i$-th element from $s_0\step{}_{\T}$ regarding to $<_{\T}$.

An outline of the execution of $\M$ is defined as follows.
\begin{enumerate}
    \item We need the following control states: $\mathit{initial}$, $\mathit{advice}$, $\mathit{decode}$, $\mathit{next}_{\Atau}$, $\mathit{choose}_i$ ($i=1,2$).
    \item The execution of $\M$ is as follows, we use a pair $(s,\delta)$ to denote the current configuration of the machine.
    \begin{enumerate}
        \item In $\mathit{initial}$ state, the machine writes the encoding of the initial state of the transition system $\encode{\uparrow_{\T}}$ on the tape, and reaches $\mathit{advice}$ state.
        \begin{equation*}
         (\mathit{initial}, \Box)\step{}^{*} (\mathit{advice}, \encode{\uparrow_{\T},1})
         \enskip.
        \end{equation*}
        \item In $\mathit{advice}$ state, the machine either increase the counter $i$ by $1$, or sends $\encode{s_0,i}$ to the advice, and gets $\encode{(a_i,s_i)}$ from the advice.
        \begin{eqnarray*}
         (\mathit{advice},\encode{s_0,i})\step{}^{*}(\mathit{advice},\encode{s_0,i+1}), or\\
         (\mathit{advice},\encode{s_0,i})\step{}^{*} (\mathit{decode},\encode{s_0,s_i,a_i})
         \end{eqnarray*}
        \item In $\mathit{decode}$ state, the machine decodes the action $a_i$ from the tape, and enters the state $\mathit{next_{a_i}}$.
        \begin{equation*}
        (\mathit{decode},\encode{s_0,s_i,a_i})\step{}^{*} (\mathit{next}_{a_i},\encode{s_0,s_i})
        \enskip.
        \end{equation*}
        \item In $\mathit{next}_{a_i}$ state, the machine either performs the action, or change its current choice to another transition.
        \begin{eqnarray*}
        (\mathit{next}_{a_i},\encode{s_0,s_i})\step{\tau} (\mathit{choose}_1,\encode{s_0,s_i}), or\\
        (\mathit{next}_{a_i},\encode{s_0,s_i})\step{a_i} (\mathit{choose}_2,\encode{s_0,s_i})
        \enskip.
        \end{eqnarray*}
        \item In $\mathit{choose}_i$ state (i=1,2), the machine projects the encoding $\encode{s_1,s_2}$ to the encoding of the $i$-th state, and enters $\mathit{advice}$ state again.
        \begin{equation*}
        (\mathit{choose}_i,\encode{s_1,s_2})\step{}^{*}(\mathit{advice},\encode{s_i,1})
        \enskip.
        \end{equation*}
    \end{enumerate}
\end{enumerate}
One can verify that $\R=\{(s,s')\mid s\in\Sta_{\T}, s'=\\(\mathit{advice}, \encode{s,i})\mbox{ or } (\mathit{decode},\encode{s,a_i,s_i}) \mbox{ or } (\mathit{next}_{a_i},\encode{s,s_i})\mbox{ or } (\mathit{choose}_1,\encode{s,s_i})\mbox{ or } (\mathit{choose}_2,\encode{s_i,s})\}$ is a branching bisimulation relation. Hence, we have $\T([\M\parallel A_{f}]_{\C})\bbisim T$.
\end{proof}
}
 Note that the transition system associated with an RTM/A is boundedly branching. Hence, by Theorem 2 in~\cite{LY14}, if a transition system has no divergence up to $\bbisimd$ and is unboundedly branching up to $\bbisimd$, then it is not executable modulo $\bbisimd$. It follows that there exist countable unboundedly branching transition systems that cannot be simulated by an RTM/A modulo $\bbisimd$.

\delete{Actually, for unboundedly branching transition systems with no divergence, it is unavoidable to introduce divergence in simulation. We first have the following fact.

\begin{lemma}~\label{lemma:branching-itma}
The labelled transition system associated with an RTM/A is boundedly branching.
\end{lemma}

Then we recall the following lemma in~\cite{LY14}.

 \begin{lemma}\label{lemma:divergence}
If a transition system is boundedly branching and does not have divergence up to $\bbisimd$, then it is boundedly branching up to $\bbisimd$.
\end{lemma}

 By analogy to the analysis in~\cite{LY14}, we have the following result.

\begin{corollary}\label{cor:unboundly-lts-itma}
There exists an unboundedly branching labelled transition system $T$, such that there is no RTM/A $[\M\parallel A_{f}]_{\C}$, satisfying $\T([\M\parallel A_{f}]_{\C})\bbisimd T$.
\end{corollary}
}

\section{Conclusion}\label{sec:conclusion}

We have discussed the relationship between two models of computation that take interaction into account. We have established that the model of RTMs subsumes and is more expressive the model of ITMs when it comes specifying behaviour, and coincides with the model of ITMs when it comes to defining $\omega$-translations.

Furthermore, we have shown that RTMs admit an extension with advice that facilitates modelling non-uniform behaviour. In \cite{BLT2013} it was established that every effective transition system can be simulated by an RTM. Our result that every countable transition system can be simulated by an RTM with advice further confirms the universal expressiveness of the notion of RTM.

In~\cite{Verbaan2006}, a complexity theory for interactive computation has been defined on the basis of ITMs and $\omega$-translations. Clearly, such a complexity theory could also be based on the restricted class of RTMs for $\omega$-translation. Such a complexity theory could then further be generalised towards a complexity theory for general executable behaviour. 
\bibliographystyle{splncs03}
\bibliography{ITMvsRTM}
\notinfull{
\newpage
\section*{Appendix}\label{appendix}

\myparagraph{Definition of limit-continuous functions}

Van Leeuwen and Wiedermann present in~\cite{vLW2006} a characterisation of the interactively computable $\omega$-translations by showing that they can be approximated by classically computable partial functions on finite sequences over $\Sigma$.
For finite and infinite sequences $\vec{x}$ and $\vec{y}$, we write $\vec{x}\prec\vec{y}$ if $\vec{x}$ is a finite and strict prefix of $\vec{y}$, and $\vec{x}\preceq\vec{y}$ if $\vec{x}\prec\vec{y}$ or $\vec{x}=\vec{y}$. We use the following definition of monotonic functions and limit-continuous functions.
\begin{definition}\label{def:mono-limit-cont}
\begin{enumerate}
\item A partial function $f:\Sigma^{*}\rightharpoonup\Sigma^{*}$ is \emph{monotonic} if for all $\vec{x},\vec{y}\in\Sigma^{*}$ such that $\vec{x}\prec\vec{y}$ and $f(\vec{y})$ is defined, it holds that $f(\vec{x})$ is defined as well and $f(\vec{x})\preceq f(\vec{y})$.
\item A partial function $\phi: \Sigma^{\omega}\rightarrow\Sigma^{\omega}$ is called \emph{limit-continuous} if there exists a classically computable monotonic partial function $f:\Sigma^{*}\rightarrow\Sigma^{*}$ such that
  $\phi(\lim_{k\rightarrow\infty}\vec{x}_k)=\lim_{k\rightarrow\infty}f(\vec{x}_k)$
for all strictly increasing chains $\vec{x}_1\prec \vec{x}_2\prec\cdots\prec \vec{x}_k\prec\cdots$ with $\vec{x}_k\in\Sigma^{*}$.
\end{enumerate}
\end{definition}

\myparagraph{Definition of Branching Bisimilarity}

In the definition of (divergence-preserving) branching bisimilarity we need the following notation: let $\step{}$ be an $\Atau$-labelled transition relation on a set $\Sta$, and let $a\in\Atau$; we write $s\step{(a)}t$ for ``$s\step{a}t$'' or ``$a=\tau$ and $s=t$''. Furthermore, we denote the transitive closure of $\step{\tau}$ by $\step{}^{+}$ and the reflexive-transitive closure of $\step{\tau}$ by $\step{}^{*}$.

\begin{definition}[Branching Bisimilarity]\label{def:bbisim}
Let $T_1=(\Sta_1,\step{}_1,\uparrow_1)$ and $T_2=(\Sta_2,\step{}_2,\uparrow_2)$ be transition systems. A \emph{branching bisimulation} from $T_1$ to $T_2$ is a binary relation $\R\subseteq\Sta_1\times\Sta_2$ such that for all states $s_1$ and $s_2$, $s_1\R s_2$ implies
\begin{enumerate}
    \item if $s_1\step{a}_1s_1'$, then there exist $s_2',s_2''\in\Sta_2$, s.t. $s_2\step{}_2^{*}s_2''\step{(a)}s_2'$, $s_1\R s_2''$ and $s_1'\R s_2'$;
    \item if $s_2\step{a}_2s_2'$, then there exist $s_1',s_1''\in\Sta_1$, s.t. $s_1\step{}_1^{*}s_1''\step{(a)}s_1'$, $s_1''\R s_2$ and $s_1'\R s_2'$.
\end{enumerate}
The transition systems $T_1$ and $T_2$ are \emph{branching bisimilar} (notation: $T_1\bbisim T_2$) if there exists a branching bisimulation $\R$ from $T_1$ to $T_2$ s.t. $\uparrow_1\R\uparrow_2$.

A branching bisimulation $\R$ from $T_1$ to $T_2$ is \emph{divergence-preserving} if, for all states $s_1$ and $s_2$, $s_1\R s_2$ implies
\begin{enumerate}
\setcounter{enumi}{2}
    \item if there exists an infinite sequence $(s_{1,i})_{i\in\mathbb{N}}$ s.t. $s_1=s_{1,0},\,s_{1,i}\step{\tau}s_{1,i+1}$ and $s_{1,i}\R s_2$ for all $i\in\mathbb{N}$, then there exists a state $s_2'$ s.t. $s_2\step{}^{+}s_2'$ and $s_{1,i}\R s_2'$ for some $i\in\mathbb{N}$; and
    \item if there exists an infinite sequence $(s_{2,i})_{i\in\mathbb{N}}$ s.t. $s_2=s_{2,0},\,s_{2,i}\step{\tau}s_{2,i+1}$ and $s_1\R s_{2,i}$ for all $i\in\mathbb{N}$, then there exists a state $s_1'$ s.t. $s_1\step{}^{+}s_1'$ and $s_1'\R s_{2,i}$ for some $i\in\mathbb{N}$.
\end{enumerate}
The transition systems $T_1$ and $T_2$ are \emph{divergence-preserving branching bisimilar} (notation: $T_1\bbisim^{\Delta}T_2$) if there exists a divergence-preserving branching bisimulation $\R$ from $T_1$ to $T_2$ s.t. $\uparrow_1\R\uparrow_2$.
\end{definition}

\myparagraph{Proof of Theorem~\ref{thm:itm-lts-iso}}

In the proof it is convenient to allow RTMs to have transitions of the form $s\step{a[d/e]S}t$, where $S$ is a stay transition with no movement of the tape head. We refer to such machines as RTMs with stay transitions. The operational semantics of RTMs can be extended to an operational semantics for RTMs with stay transitions by adding the clause: $(s,\delta_L\check{d}\delta_R)\step{a}(t,\delta_L \check{e}\delta_R)$ iff $s\step{a[d/e]S}t$. The transition system of an RTM with stay transitions can be simulated by an RTM up to divergence-preserving branching bisimilarity.

\begin{lemma}~\label{lemma:RTMN}
The transition system associated with an RTM with stay transitions is executable up to divergence-preserving branching bisimilarity.
\end{lemma}

\begin{proof}
We suppose that $\M=(\Sta,\step{},\uparrow)$ is an RTM with stay transitions, and its transition system is $\T(\M)$. We define a normal RTM $\M'=(\Sta_1,\step{}_1,\uparrow_1)$ that simulates $\T(\M)$ as follows:

\begin{enumerate}
\item $\Sta_1=\Sta\cup\{s_t\mid s,t\in\Sta\}$; 
\item $s\step{a[d/e]L}_1 t$ iff $s\step{a[d/e]L}t$;
\item $s\step{a[d/e]R}_1 t$ iff $s\step{a[d/e]R}t$;
\item $s\step{a[d/e]L}_1 s_t$ and $s_t\step{\tau[d/d]R}_1t$ iff $s\step{a[d/e]S}t$; and
\item $\uparrow_1=\uparrow$.
\end{enumerate}

Then it is straight forward to $\T(\M')\bbisimd\T(\M)$.
\end{proof}

We proceed to give a proof for Theorem~\ref{thm:itm-lts-iso}

\begin{proof}
    We let $\I=(Q,\step{}_{\I},\qin)$ be an ITM. By Lemma~\ref{lemma:RTMN}, it is enough to show that there exists an RTM with stay transitions $\M$ satisfying $\T(\M)\bbisimd\T(\I)$.
    We construct $\M=(\Sta,\step{},\uparrow)$ as follows:
    \begin{enumerate}
        \item $\Sta=\In\cup\Out$, where $\In=Q$ and $\Out=\{s_o\mid o\in\Sigma_{\lambda}, s\in Q\}$ as defined in Definition~\ref{def:lts-itm};
        \item the transition relation $\step{}$ is defined by:
        $s\step{\mathit{in}(i)[d/e]M}t_o$ if $(s,d,i)\step{}_{\I}(t,e,M,o)$, and $s_o\step{\mathit{out}(o)[e/e]S}s$ for all $s\in\Sta$, $o\in\Sigma_{\lambda}$; and
        \item $\uparrow=\qin$.
    \end{enumerate}
    Then according to Definitions~\ref{def:lts-tm} and~\ref{def:lts-itm}, we get a transition system $\T(\M)=\T(\I)$, where `=' is the pointwise equality, which also implies $\T(\M)\bbisimd \T(\I)$.
\end{proof}

\myparagraph{Proof of Lemma~\ref{lemma:i-o-lts}}
\begin{proof}
 A state in $\Sta_{\M}$ is a configuration $(s,\delta)$ of $\M$, and we can make a partition of the set of all configurations according to the control states. If $s\in\In$, then $(s,\delta)\in\In_{\M}$; if $s\in\Exe$, then $(s,\delta)\in\Exe_{\M}$, where $\In$ and $\Exe$ are defined in Definition~\ref{def:rtm-omega}.
\begin{enumerate}
    \item (Alternation) By condition 1 in Definition~\ref{def:rtm-omega}, we have $\Sta=\In\cup\Exe$ and $\In\cap\Exe=\emptyset$, which infers $\Sta_{\M}=\In_{\M}\cup\Exe_{\M}$ and $\In_{\M}\cap\Exe_{\M}=\emptyset$; moreover, by condition 2, for a transition $s\step{a[d/e]M}t$, if $s\in\In$, then $a\in \{?0,?1\}$ and $t\in\Exe$; if $s\in\Exe$, then $a\in\{!0,!1,\tau\}$ and $t\in \In$, which infers that for every transition $s\step{a} s'$, if $s\in\In_{\M}$, then $a\in\{?0,?1\}$ and $s'\in\Exe_{\M}$; if $s\in\Exe_{\M}$, then $a\in\{!0,!1,\tau\}$ and $s'\in\In_{\M}$.
    \item (Unambiguity) By condition 3 in Definition~\ref{def:rtm-omega}, for all $(s,d)$ where $s\in\Exe$ and $d\in\Dbox$, there is at most one transition $s\step{o[d/e]M}t$, which infers that for every $s\in\Exe_{\M}$, there is exactly one outgoing transition $s\step{a} s'$ with $a\in\{!0,!1,\tau\}$.
    \item (Totality) By condition 4 in Definition~\ref{def:rtm-omega}, for all $(s,d)$ where $s\in\In$ and $d\in\Dbox$, there are exactly two transitions of the form $s\step{i[d/e]M}t$, with $?0$ and $?1$ as there labels, respectively, which infers that for every $s\in\In_{\M}$, there are two outgoing transitions labelled by $?0$ and $?1$, respectively.
\end{enumerate}
\end{proof}

\myparagraph{Proof of Lemma~\ref{lemma:io-LTS-omega}}

\begin{proof}
   Let $\T$ be an i/o interactive transition system, and let $s_0$ be the initial state of $\T$. By Definition~\ref{def:in-out-lts}, we need to show that there exists an $\omega$-translation $\phi$ such that for every $\vec{x}\in\Sigma^{\omega}$, there exists a trace $\sigma\in\witr(s_0)$ with input stream $\vec{x}$, and for every trace with input stream $\vec{x}$, its output stream is $\vec{y}=\phi(\vec{x})$.

   By the alternation condition in Lemma~\ref{lemma:i-o-lts}, every $\sigma\in\witr(s_0)$ is of the form $i_0 o_0 i_1 o_1\ldots$ where $i_j\in\{?0,?1\}$ and $o_j\in\{!0,!1,\tau\}$.
   Let $\vec{x}$ be an arbitrary input stream, by the totality condition in Lemma~\ref{lemma:i-o-lts}, we can find a trace $\sigma\in\witr(s_0)$ with input stream $\vec{x}$.

    Moreover, given an trace $\sigma$ with an infinite input stream $\vec{x}$, by interactiveness, it would always produce an infinite output stream $\vec{y}$.

    Finally, by unambiguity, there do not exist two traces sharing the same input stream. It follows that for every trace with input stream $\vec{x}$, its output stream is $\vec{y}$. Hence, we relate with every input stream a unique output stream, in a way, we get a $\omega$-translation from $\T$.
\end{proof}

\myparagraph{Proof of Theorem~\ref{thm:lc-exe-translation2}}

It is not hard to show the following lemmas,

\begin{lemma}\label{lemma:bis-translation}
Let $\T_1$ and $\T_2$ be two i/o transition systems, and $\T_1\bbisim \T_2$. Then they realize the same $\omega$-translation.
\end{lemma}

\begin{proof}
We let $s_1$ and $s_2$ be the initial states of $\T_1$ and $\T_2$, respectively. As $\T_1\bbisim \T_2$, we have that for every $\sigma\in\witr(s_1)$, there exists a trace $\sigma'\in\witr(s_2)$, and they share the same input and output stream, and vice versa. It follows that $\T_1$ and $\T_2$ realize the same $\omega$-translation.
\end{proof}

\begin{lemma}~\label{lemma:finite-input}
Let $\T$ be an interactive i/o transition system, and let $s_0$ be its initial state, then the following function is computable:
$g: \Sigma^{*}\rightarrow \Sigma^{*}$, satisfying that if $g(x)=y$, then for every $\sigma\in\witr(s_0)$ with input and output stream $\vec{x}$ and $\vec{y}$, if $x\prec\vec{x}$, then $y\prec\vec{y}$.
\end{lemma}

\begin{proof}
We consider a finite trace from $s_0$, we can associate with such a trace its input and output sequences in a similar way as defined in Definition~\ref{def:in-out-lts}.
By Lemma~\ref{lemma:i-o-lts}, there is only one finite trace with $x$ as its input sequence, and its output sequence is $y$. By totality, it holds for every $x\in\Sigma^{*}$.
As the transition relation of i/o transition systems are computable, $g$ is also computable.
\end{proof}

Hence, we are able to make a proof of Theorem~\ref{thm:lc-exe-translation2}
\begin{proof}
We let $\phi$ be an $\omega$-translation.
\begin{enumerate}
\item For the ``only if'' part, we need to show that there exists a computable total function $g:\Sigma^{*}\rightarrow\Sigma^{*}$, such that $g$ is monotonic and for all strictly increasing chains $u_1\prec u_2\prec\ldots\prec u_t\prec\ldots$ with $u_t\in\Sigma^{*}$ ($t\geq 1$), one has $\phi(\lim_{t\rightarrow\infty}u_t)=\lim_{t\rightarrow\infty}g(u_t)$.

    We assume that $\phi$ is realized by an interactive i/o transition system $\T$, and we let $s_0$ be the initial state of $\T$. By Lemma~\ref{lemma:finite-input} the following function is computable: $g: \Sigma^{*}\rightarrow \Sigma^{*}$, satisfying that if $g(x)=y$, then for every $\sigma\in\witr(s_0)$ with input and output stream $\vec{x}$ and $\vec{y}$, if $x\prec\vec{x}$, then $y\prec\vec{y}$. By unambiguity and totality, $g$ is a monotonic and total computable function.

 Moreover, for a strictly increasing chain $u_1\prec u_2\prec\ldots\prec u_t\prec\ldots$ with $u_t\in\Sigma^{*}$ for $t\geq 1$, the computation of $\lim_{t\rightarrow\infty}g(u_t)$ is the execution of a trace $\sigma$ receiving the input stream $\lim_{t\rightarrow\infty}u_t$. Hence we have $\phi(\lim_{t\rightarrow\infty}u_t)=\lim_{t\rightarrow\infty}g(u_t)$.

    Thus, $g$ is the computable total function we need, and it follows that $\phi$ is a computable limit-continuous total function.

\item

    For the ``if'' part, we assume that $\phi$ is a total limit-continuous function, and design an RTM $\M$ to realize this translation.
    By Theorem~\ref{thm:int-trans}, $\phi$ is interactively computable by some ITM $\M'$. According to Definition~\ref{def:lts-itm} and Lemma~\ref{lemma:i-o-lts}, the transition system associated with $\M'$ is an i/o transition system, moreover, according to Corollary~\ref{cor:itm-lts-exe}, it is an executable i/o transition system. Therefore, we have shown that $\phi$ is an executable $\omega$-translation by Lemma~\ref{lemma:bis-translation}.
\end{enumerate}
\end{proof}

\myparagraph{Proof of Theorem~\ref{thm:bound-lts-rtma}}

\begin{proof}
We assume that $\T=(\Sta_{\T},\step{}_{\T},\uparrow_{\T})$ is an $\Atau$-labelled transition system. It has $n$ distinct action labels and its branching degree is bounded by $k$.
Then we encode $\Atau$ and $\Sta_{\T}$ as natural numbers. Let $\encode{a}$ and $\encode{s}$ be the encodings of an action and a state, and $\encode{x_1,x_2,\ldots,x_n}$ be the encoding of an $n$-tuple.

The advice process $A_{f}$ realizes the following function:
\begin{equation*}
f(\encode{s})=\encode{a_1,\ldots, a_m, s_1,\ldots,s_m}
\enskip,
\end{equation*}
where $(a_i,s_i)\in \{(a_1,s_1),\ldots,(a_m,s_m)\}$ iff $s\step{a_i}_{\T} s_i$.

An outline of the execution of $\M$ is defined as follows.
\begin{enumerate}
    \item We need the following control states: $\mathit{initial}$, $\mathit{advice}$, $\mathit{decode}$, $\mathit{next}_{\Atau^{\leq k}}$ ($\Atau^{\leq k}$ ranges over all $\Atau$ words with at most length $k$), $\mathit{choose}_i$ ($i\leq k$).
    \item The execution of $\M$ is as follows, its initial configuration is $(\mathit{initial},\Box)$.
    \begin{enumerate}
        \item In $\mathit{initial}$ state, the machine writes the encoding of initial state of the transition system $\encode{\uparrow_{\T}}$ on the tape, and reaches $\mathit{advice}$ state.
        \begin{equation*}
         (\mathit{initial}, \Box)\step{}^{*} (\mathit{advice}, \encode{\uparrow_{\T}})
         \enskip.
        \end{equation*}
        \item In $\mathit{advice}$ state, the machine sends the encoding of the current state $\encode{s_0}$ to the advice process, and gets the encoding of list of all possible transitions $\encode{a_1,\ldots, a_m, s_1,\ldots,s_m}$ from the advice process.
        \begin{equation*}
         (\mathit{advice},\encode{s_0})\step{}^{*} (\mathit{decode},\encode{a_1,\ldots, a_m, s_1,\ldots,s_m})
         \enskip.
        \end{equation*}
        \item In $\mathit{decode}$ state, the machine decodes all the actions from the tape, and enters one of the $\mathit{next}$ state.
        \begin{equation*}
        (\mathit{decode},\encode{a_1,\ldots, a_m, s_1,\ldots,s_m})\step{}^{*} (\mathit{next}_{\{a_1,\ldots,a_m\}},\encode{s_1,\ldots,s_m})
        \enskip.
        \end{equation*}
        \item In $\mathit{next}_{\{a_1,\ldots,a_m\}}$ state, the machine chooses one of the actions. For every $i=1,\ldots,m$, there is a transition
        \begin{equation*}
        (\mathit{next}_{\{a_1,\ldots,a_m\}},\encode{s_1,\ldots,s_m})\step{a_i} (\mathit{choose}_i,\encode{s_1,\ldots,s_m})
        \enskip.
        \end{equation*}
        \item In $\mathit{choose}_i$ state, the machine projects the encoding $\encode{s_1,\ldots,s_m}$ to the encoding of the $i$-th state, and enters $\mathit{advice}$ state again.
        \begin{equation*}
        (\mathit{choose}_i,\encode{s_1,\ldots,s_m})\step{}^{*}(\mathit{advice},\encode{s_i})
        \enskip.
        \end{equation*}
    \end{enumerate}
\end{enumerate}
The above procedure describes the simulation of a step of transition $s_0\step{a_i}_{\T}s_i$ in $\T$. Note that the choice of the transition is happened only in the state $\mathit{next}_{\{a_1,\ldots,a_m\}}$. Moreover, no infinite $\tau$-transition sequence is introduced for simulation.
Hence, we are able to verify that $\T([\M\parallel A_{f}]_{\C})\bbisimd \T$.
\end{proof}

\myparagraph{Proof of Theorem~\ref{thm:lts-rtma}}

\begin{proof}
We assume that $\T=(\Sta_{\T},\step{}_{\T},\uparrow_{\T})$ is a countable $\Atau$-labelled transition system. It has $n$ distinct action labels and it possibly has infinitely branching.
Then we encode $\Atau$ and $\Sta_{\T}$ as natural numbers. Let $\encode{a}$ and $\encode{s}$ be the encodings of an action and a state, and $\encode{x_1,x_2,\ldots,x_n}$ be the encoding of an $n$-tuple.

The transition relation $\step{}_{\T}$ maps a state, namely, $s_0$, to a possibly infinite set $\{(a_i,s_i)\mid s_0\step{a_i}_{\T} s_i\}$, denoted by $s_0\step{}_{\T}$. We define an order $<_{\T}$ over the elements in the set $s_0\step{}_{\T}$ such that $(a,s)<_{\T}(a',s')$, if $\encode{a,s}<_{\T}\encode{a',s'}$.

The advice function $A_{f}$ realizes the following function:
\begin{equation*}
f(\encode{s_0,i})=\encode{a_i,s_i}
\enskip,
\end{equation*}
where $(a_i,s_i)$ is the $i$-th element from $s_0\step{}_{\T}$ regarding to $<_{\T}$.

An outline of the execution of $\M$ is defined as follows.
\begin{enumerate}
    \item We need the following control states: $\mathit{initial}$, $\mathit{advice}$, $\mathit{decode}$, $\mathit{next}_{\Atau}$, $\mathit{choose}_i$ ($i=1,2$).
    \item The execution of $\M$ is as follows, we use a pair $(s,\delta)$ to denote the current configuration of the machine.
    \begin{enumerate}
        \item In $\mathit{initial}$ state, the machine writes the encoding of the initial state of the transition system $\encode{\uparrow_{\T}}$ on the tape, and reaches $\mathit{advice}$ state.
        \begin{equation*}
         (\mathit{initial}, \Box)\step{}^{*} (\mathit{advice}, \encode{\uparrow_{\T},1})
         \enskip.
        \end{equation*}
        \item In $\mathit{advice}$ state, the machine either increase the counter $i$ by $1$, or sends $\encode{s_0,i}$ to the advice, and gets $\encode{(a_i,s_i)}$ from the advice.
        \begin{eqnarray*}
         (\mathit{advice},\encode{s_0,i})\step{}^{*}(\mathit{advice},\encode{s_0,i+1}), or\\
         (\mathit{advice},\encode{s_0,i})\step{}^{*} (\mathit{decode},\encode{s_0,s_i,a_i})
         \end{eqnarray*}
        \item In $\mathit{decode}$ state, the machine decodes the action $a_i$ from the tape, and enters the state $\mathit{next_{a_i}}$.
        \begin{equation*}
        (\mathit{decode},\encode{s_0,s_i,a_i})\step{}^{*} (\mathit{next}_{a_i},\encode{s_0,s_i})
        \enskip.
        \end{equation*}
        \item In $\mathit{next}_{a_i}$ state, the machine either performs the action, or change its current choice to another transition.
        \begin{eqnarray*}
        (\mathit{next}_{a_i},\encode{s_0,s_i})\step{\tau} (\mathit{choose}_1,\encode{s_0,s_i}), or\\
        (\mathit{next}_{a_i},\encode{s_0,s_i})\step{a_i} (\mathit{choose}_2,\encode{s_0,s_i})
        \enskip.
        \end{eqnarray*}
        \item In $\mathit{choose}_i$ state (i=1,2), the machine projects the encoding $\encode{s_1,s_2}$ to the encoding of the $i$-th state, and enters $\mathit{advice}$ state again.
        \begin{equation*}
        (\mathit{choose}_i,\encode{s_1,s_2})\step{}^{*}(\mathit{advice},\encode{s_i,1})
        \enskip.
        \end{equation*}
    \end{enumerate}
\end{enumerate}
One can verify that $\R=\{(s,s')\mid s\in\Sta_{\T}, s'=\\(\mathit{advice}, \encode{s,i})\mbox{ or } (\mathit{decode},\encode{s,a_i,s_i}) \mbox{ or } (\mathit{next}_{a_i},\encode{s,s_i})\mbox{ or } (\mathit{choose}_1,\encode{s,s_i})\mbox{ or } (\mathit{choose}_2,\encode{s_i,s})\}$ is a branching bisimulation relation. Hence, we have $\T([\M\parallel A_{f}]_{\C})\bbisim T$.
\end{proof} 
}
\end{document}